\documentclass[12pt]{article}
\usepackage{graphicx} 
 \usepackage[utf8]{inputenc}
\usepackage[ margin=2.5cm]{geometry}
\usepackage{dsfont}
\usepackage{setspace}
\usepackage{indentfirst}
\usepackage{xcolor}
\usepackage{times}
\usepackage{fancyhdr}
\usepackage{url}
\usepackage{float}
\usepackage{authblk}
\usepackage{subcaption} 
\usepackage{enumerate}
\usepackage{amsmath, amsthm, amsfonts, amssymb, amsxtra}
\usepackage{multirow}
\usepackage{adjustbox}
\usepackage{booktabs}
\usepackage{soul}
\usepackage{enumerate}
\usepackage{geometry}
\usepackage{hyperref}
\usepackage{lipsum}
\usepackage{tabularx}	
\usepackage[numbers]{natbib}

\newtheorem{theorem}{Theorem}
\newtheorem{lemma}{Lemma}

\theoremstyle{definition}

\title{On the formulation and analysis of a stage-structured eco-evolutionary model with pulsed disturbances}
\date{August 2026}
\author{Abigail D'Ovidio Long

Department of Mathematics, Computer Science and Statistics

Muhlenberg College

Allentown, PA 18104

Email: abbeydovidio@muhlenberg.edu\\
\and
Abigail Adjei

Department of Mathematics

Oregon State University 

Corvallis, OR 97331 

Email: adjeia@oregonstate.edu \\

\and
Swati Patel

Department of Mathematics

Oregon State University 

Corvallis, OR 97331 

Email: patelswa@oregonstate.edu \\}

\begin{document}

\maketitle
\newpage
\begin{abstract}
Motivated by the pulsed control and resulting evolution of harmful biological populations, we develop and analyze an ordinary differential equation model of an $n-$dimensional stage-structured population undergoing periodic disturbances. The eco-evolutionary model couples both ecological dynamics through life cycle information, and evolutionary dynamics of allele frequency changes. Building off prior work in eco-evolutionary models and theory of linear systems, we show that if a resistant allele is present in the population, the population will always approach full resistance. Furthermore, we establish persistence conditions and show that these depend solely on system dynamics at full resistance.  Finally, we demonstrate the utility of this model via an application to the Spotted-Winged Drosophila, an invasive species often treated periodically with insecticides.  We use a parameterized model to further understand the complexities of relationships between ecological and evolutionary features of a species and its resistance evolution due to pulsed disturbances. 
\end{abstract}
\textit{Keywords}: eco-evolutionary dynamics, impulsive differential equations, stage-structured population models, resistance evolution
\section{Introduction}

Many biological populations pass through distinct life stages and are exposed to disturbances that occur repeatedly over time. These disturbances may affect the different life stages differently as individuals transition through the population. Moreover, survival through the various stages may depend on genotype. Hence, repeated exposure to a disturbance can change both the abundance and the genetic composition of the population. This creates an interaction between the ecological dynamics of the population and the evolution of resistance. Stage-structured populations undergoing repeated treatment provide clear examples of coupled ecological and evolutionary dynamics, in cancer \cite{ dovidio2026, foo2014evolution, lakmeche_chemo_periodic_2000, ren2017tumour}, agricultural systems \cite{mailleret_pulsed_biocontrol_2009, mermer2021timing, tang_periodic_lv_2002}, and infectious diseases \cite{ gao2017mass, patel_spectral_2024, patel2025anthelmintic, WHO2006preventive}. For example, in the Spotted-Wing Drosophila, where insecticides can differentially affect egg, larval, pupal, and adult stages \cite{Mermer2021, Wiman2016}, periodic treatments alter population abundance while selecting for resistance alleles \cite{Ganjisaffar2022, GressZalom2019}. These systems illustrate how stage structure and pulsed control can jointly shape population persistence and the evolution of resistance. In this work, we couple classical population genetics equations with stage-structured population dynamic equations to formulate a general eco-evolutionary model, in which a periodic pulse disturbance potentially impacts each stage and genotype differently.

In the last few decades, there has been a substantial increase in the number of models formulated to examine the dual role of ecological and evolutionary processes on population and community dynamics. These are roughly divided into three different overarching frameworks, distinguished by their evolutionary assumptions and set up: 1) adaptive dynamics, (2) quantitative genetics, and (3) explicit population genetics. Adaptive dynamics is a modeling framework that assumes mutations are rare and often that selection is frequency-dependent and that this determines the long-term evolution of traits in ecological populations \cite{dieckmann1996dynamical, geritz1998evolutionarily, metz1996adaptive}. Quantitative genetics, on the other hand, assumes that traits are governed by infinitely many genetic loci, each with infinitesimally small effects, and hence, is often associated with a normal trait distribution assumption \cite{lande1976natural, pastore2021evolution, patel2015evolutionarily, schreiber2011community}.  Explicit population genetics models distinct loci (usually one \cite{schreiber2018evolution} or two \cite{patel2019ecoevolutionary}) and the frequencies of different alleles.  Indeed, numerous models of eco-evolutionary dynamics have highlighted how the coupled system can have qualitatively different dynamics, for example in stability \cite{cortez2016magnitude, patel2018partitioning} or coexistence \cite{patel2026persistence, patel2018robust}, compared to systems studied in an uncoupled manner. Hence, in the context of population control, these are critical to account for and examine the question: when does evolution fundamentally change our expected outcomes?

From an evolutionary perspective, there is a foundation of stage-structured models in the literature. This foundation includes both continuous and discrete models. An early contribution can be found in \cite{Taylor1990}, where fitness of mutant alleles are studied in a generic class-structure setting through continuous dynamics. In this work, they consider classes such as life stage, sex, and age. More recently, \cite{Barfield2011} proposed a general continuous framework for evolution in stage-structured populations. This model was intended for studying quantitative genetics, but is adaptable for allele and genotype frequencies. In \cite{deVries2019}, discrete matrix models were utilized to study population genetics in a species with three genotypes governed by two alleles. However, none of these works incorporate both continuous and discrete dynamics into a single model to study the implication of various time scales on the evolutionary process.

Combining continuous and discrete dynamics through pulsed ordinary differential equations can be found throughout the literature in modeling population dynamics. For example, \cite{Dong2006, He2015, Jiao2008, Jiao2007,  Jin2005, Kalra2022} and \cite{Li2022} study models of ecological problems through pulsed ODEs, whereas \cite{dovidio2026} uses the framework to analyze dynamics of cancer cells undergoing radiation therapy. In general, pulsed ODEs are a useful tool for systems in which dynamics are primarily continuous, except for at periodic times in which an abrupt change occurs, (relatively quickly compared to life cycle changes), such as in harvesting or stocking. Most pertinent to our work is \cite{patel2024spectral}, which considers the control and resistance of a stage-structured population through pulsed massive drug administration. Notably, this previous work only considered two stages in which only one stage is impacted by control. In the present work, this is generalized to arbitrary many stages in which each stage can be impacted differentially and hence, vulnerable to selection in all stages due to the pulse. 

The goals of this work are three-fold. In Section \ref{model_formulation}, we formulate a stage-structured pulsed ODE model which encapsulates both ecological and evolutionary dynamics for a population with $n$ stages undergoing periodic control. The model is a generalized version of that presented in \cite{patel2026persistence}, where a two-dimensional system is presented in which only population members in the adult stage are impacted by the pulse. In Section \ref{analysis}, we give thorough analysis of asymptotic behavior of the model by considering isolated cases before proving the main results on ecological and evolutionary dynamics for the entire system. In Section \ref{numerical}, we apply our model to the Spotted-Winged Drosophila, as a relevant example of a pest in an agricultural setting. 

\section{Model Formulation}\label{model_formulation}
Here, we consider a single population structured into $n$ life stages and let $x(t)=(x_1(t),...,x_n(t))$ be the population densities of each stage at time $t$.  We assume that stage $i$ has a per-capita mortality rate of $\mu_i$, developmental transitions between stage $i$ and $j=i+1$ have rate $\alpha_{ij}$, and stage $n$ is a fully mature stage that produces new offspring that enter into stage $1$ at rate $\lambda$.  These rates are assumed to be density-independent. 

Additionally, we assume that the population is disturbed, periodically every $\tau$ time units, and this instantaneously causes a mortality to some proportion of the population.  We assume that the impact of the disturbance is both stage-dependent and genotype-dependent.  That is, the probability an individual survives the disturbance depends on which stage it is in and its genetics.  We model the genotype-dependence by assuming there is a single locus with two alleles, $1$ and $2$.  We say that the probability that an individual in stage $i$ of genotype $kj$ survives the disturbance is given by $\delta_i^{kj}$. Throughout, we assume that, for each stage $i$,
\begin{equation}
    \delta^{11}_i\leq \delta^{12}_i\leq \delta^{22}_i
    \label{eq:delta}
\end{equation} 
which represents the case that allele $2$ is a resistant allele. 
Under this assumption, we define the survival based on the selection coefficient $s$, heterozygosity $h$, following standard formulation from population genetics:

\begin{equation}
    \delta^{11}_i = (1-s)\delta^{22}_i \quad \delta^{12}_i = (1-sh)\delta^{22}_i
    \label{eq:HW}
\end{equation}

The selection coefficient captures the relative survival of the susceptible to resistant genotypes and is assumed to be between 0 and 1.  The heterozygosity captures the dominance of the resistant or susceptible alleles and is also assumed to be between 0 and 1. (If $h$ is close to 0, then resistance is dominant, while if $h$ is close to 1, then susceptibility is dominant.)

We make some simplifying assumptions.  First, we assume that no other life history traits (e.g. fecundity, natural mortality) are dependent on genotype.  That is, there is no fitness cost to resistance.  This is appropriate in some cases, where there is little to no evidence of costs to resistance, such as in some helmintic parasites \cite{ borgsteede1989lack, dilks2021newly}.  It also serves as the purpose of being a baseline model from which future extensions can be made and compared to these results. Then, the population dynamics outside of the disturbance are genotype-independent and given by

\begin{equation}
    \frac{dx}{dt} = A_{\text{eco}}x
    \label{eq:ecodynamics}
\end{equation}
where $A_{\text{eco}}$ is a matrix that captures the mortality, fecundity and developmental rates.  An example for when $n=3$ is
\begin{equation}
    A_{\text{eco}}=\begin{bmatrix}
    -(\mu_1+\alpha_{12})  & 0 & \lambda \\
     \alpha_{12} & -(\mu_2+\alpha_{23}) & 0\\
     0 &\alpha_{23} & -\mu_3 \\
    \end{bmatrix};
    \label{eq:A_eco}
\end{equation}

We let $y_i(t)$ be the frequency of allele $2$ in stage $i$.  A second simplifying assumption: we assume Hardy-Weinberg equilibrium so that the frequency of allele $2$ determines the genotype frequencies in each stage. By tracking the frequency changes caused by developmental transitions between stages, we derive the dynamics of the frequency of allele $2$ as  
\begin{equation}
    \frac{dy}{dt} = A_{\text{evo}}(x)y
    \label{eq:evodynamics}
\end{equation}
where now $A_{\text{evo}}$ depends on the population densities $x$ in each stage.  An example again in 3 stages is 
\begin{equation}
    A_{\text{evo}}=\begin{bmatrix}
    -\lambda \frac{x_3}{x_1}  & 0 &\lambda \frac{x_3}{x_1} \\
      \alpha_{12}\frac{x_1}{x_2} & -\alpha_{12}\frac{x_1}{x_2} & 0\\
     0 &\alpha_{23}\frac{x_2}{x_3} & -\alpha_{23}\frac{x_2}{x_3}
    \end{bmatrix}
    \label{eq:A_evo}
\end{equation}
A full derivation of this is given in the Supplementary Material. In line with intuition, the rates of changes to the allele frequency of stage $i$ is inversely proportional to the population density of stage $i$ and proportional to the population density of the stage feeding in to stage $i$. 

Then, to model the impact of the pulse disturbance, we define the mean survival of the population in stage $i$ in response to the pulse as it depends on the frequency of allele $2$:
\begin{align}\label{eq:mean_survival}
\bar\delta_i = y_i^2\delta^{22}_i + 2y_i(1-y_i)\delta^{12}_i + (1-y_i)^2\delta^{11}_i
\end{align}
We also define the marginal survival probability (of individuals with at least one resistant allele) in stage $i$:
\begin{align}\label{eq:marginal_survival}
\bar\delta_i^m = y_i\delta^{22}_i + (1-y_i)\delta^{12}_i
\end{align}

From this, we derive the impact of the pulse disturbance on the population densities and the frequency of resistance allele in each stage.
Our full model equations are formulated as the impulse differential equations

\begin{align} \label{eq:pulse}
\text{continuous} & \begin{cases}
\frac{dx}{dt} &= A_{\text{eco}}x \\
\frac{dy}{dt} &= A_{\text{evo}}(x) y \\
\end{cases}\\
\text{pulse} & \begin{cases}
x(k\tau^+) &= D_{\text{eco}}(y)x(k\tau^-) \\
y(k\tau^+) &= D_{\text{evo}}(y)y(k\tau^-)
\end{cases}
\end{align}
where $D_{\text{eco}}(y)$ is a diagonal matrix in which the $i^{th}$ diagonal element is the mean survival $\bar{\delta}_i$ (which depends on $y_i$). Similarly, $D_{\text{evo}}(y)$ is a diagonal matrix in which the $i^{th}$ diagonal element the ratio of the marginal survival to the mean survival: $\bar\delta_i^m/\bar\delta_i$ (as in classical population genetics). 
Note these equations have been derived from careful tracking of densities of genotypes. More details of the derivation are given in the Supplementary Material. 

The state space for these dynamics is $\mathbb{R}^n_+ \times [0,1]^n$, where $\mathbb{R}^n_+ = \{x | x_i>0\}$.  This forms a forward invariant set. Our goal here is to analyze the properties and asymptotic behavior of this coupled eco-evolutionary model equations, focusing on the persistence and evolution of resistance due to the pulsed control. We observe that in this model the ecological dynamics are decoupled from the genetics, in the absence of the pulse disturbance.  Selection only occurs instantanesously at the pulsed moments and these are decoupled from the ecological variables. Hence, eco-evolutionary feedbacks emerge from continuous evolutionary dynamics and the pulse impacts on the ecology.  

We begin with a preliminary result that shows that at the moment of the pulse, frequency of the resistant allele cannot decrease, assuming that some proportion of the population in that stage survives. 
\begin{lemma}\label{result_nondec_pulse}
Assume that for all stages $i = 1, 2, \dots, n$, mean survival rate $\bar{\delta}_i>0$. Then 
\begin{equation}
y_i(k\tau^+)\ge y_i(k\tau^-).
\label{eq:pulse_nonnegative_increase}
\end{equation}
That is, the pulse disturbance has a non-decreasing impact on resistant allele frequency.
\end{lemma}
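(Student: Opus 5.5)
The plan is to compute directly with the pulse map: $y_i(k\tau^+) = \frac{\bar\delta_i^m}{\bar\delta_i}\, y_i(k\tau^-)$, with $\bar\delta_i$ and $\bar\delta_i^m$ evaluated at $y_i = y_i(k\tau^-)$. Write $y=y_i(k\tau^-)\in[0,1]$. If $y=0$, both sides of \eqref{eq:pulse_nonnegative_increase} vanish and there is nothing to prove. Otherwise $y>0$, and since $\bar\delta_i>0$ by hypothesis, the inequality is equivalent to $\bar\delta_i^m \ge \bar\delta_i$. So the whole lemma reduces to comparing the marginal survival of allele $2$ with the mean survival in stage $i$.

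For that comparison, I expand the mean survival \eqref{eq:mean_survival} by grouping the terms so the marginal survival \eqref{eq:marginal_survival} appears:
\[
\bar\delta_i = y\bigl(y\delta^{22}_i + (1-y)\delta^{12}_i\bigr) + (1-y)\bigl(y\delta^{12}_i + (1-y)\delta^{11}_i\bigr) = y\,\bar\delta_i^m + (1-y)\,\bar\delta_i^{s},
\]
where $\bar\delta_i^{s} := y\delta^{12}_i + (1-y)\delta^{11}_i$ is the analogous marginal survival of allele $1$. This gives
\[
\bar\delta_i^m - \bar\delta_i = (1-y)\bigl(\bar\delta_i^m - \bar\delta_i^{s}\bigr) = (1-y)\bigl(y(\delta^{22}_i-\delta^{12}_i) + (1-y)(\delta^{12}_i-\delta^{11}_i)\bigr).
\]
Every factor here is nonnegative: $1-y\ge 0$ and $y\ge 0$ because $y\in[0,1]$, and the two differences are nonnegative by the ordering \eqref{eq:delta}. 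Alternatively, via \eqref{eq:HW}, the differences equal $sh\,\delta^{22}_i$ and $s(1-h)\delta^{22}_i$ with $s,h\in[0,1]$. Hence $\bar\delta_i^m\ge\bar\delta_i$, so $\bar\delta_i^m/\bar\delta_i\ge 1$, and multiplying by $y\ge 0$ gives the claim.

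I do not expect a real obstacle. The only care needed is at the boundary cases. When $y=0$ the conclusion holds trivially, so division by $y$ is never required. When $y=1$ the factor $1-y$ vanishes and we get equality, consistent with $y_i(k\tau^+)=1$. The hypothesis $\bar\delta_i>0$ is exactly what makes $D_{\text{evo}}$ well defined, and the argument is applied stage by stage because $D_{\text{evo}}$ is diagonal.
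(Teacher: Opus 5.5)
Your proposal is correct and follows essentially the same route as the paper: reduce to $\bar\delta_i^m \ge \bar\delta_i$, then factor the difference as $(1-y)\bigl[y(\delta^{22}_i-\delta^{12}_i)+(1-y)(\delta^{12}_i-\delta^{11}_i)\bigr]$, which is nonnegative by the ordering \eqref{eq:delta}. Writing $\bar\delta_i$ as a convex combination of the two allelic marginal survivals is just a tidier way to reach the same factorization, and the paper avoids your $y=0$ case split by keeping $y_i^-$ as a factor in front of the difference.
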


\begin{proof}
Set 
\[
y_i^-=y_i(k\tau^-),
\qquad
y_i^+=y_i(k\tau^+).
\]
By the formulation of $D_{evo}(y)$, 
\begin{align}
y_i^+-y_i^-
=
\frac{\bar{\delta}_i^{\,m}}{\bar{\delta}_i}y_i^- - y_i^-
=y_i^-\left(\frac{\bar{\delta}_i^{\,m}-\bar{\delta}_i}{\bar{\delta}_i}\right).
\end{align}
As \(0\le y_i^-\le 1\) and $\bar{\delta}_i>0$, it is will suffice to show that \(\bar{\delta}_i^{\,m}-\bar{\delta}_i \geq 0\).
From (\ref{eq:mean_survival}) and (\ref{eq:marginal_survival}), 
\begin{align}\label{eq:pulse_difference_formula}
\notag \bar{\delta}_i^{\,m}-\bar{\delta}_i
&=
\bigl(y_i^-\delta_i^{22}+(1-y_i^-)\delta_i^{12}\bigr)
-
\bigl((y_i^-)^2\delta_i^{22}+2y_i^-(1-y_i^-)\delta_i^{12}+(1-y_i^-)^2\delta_i^{11}\bigr) \\
&= (1-y_i^-)
\Bigl[
y_i^-(\delta_i^{22}-\delta_i^{12})
+
(1-y_i^-)(\delta_i^{12}-\delta_i^{11})
\Bigr].
\end{align}
 By \ref{eq:delta},
\begin{align}
\delta_i^{22}-\delta_i^{12}\ge 0,
\qquad
\delta_i^{12}-\delta_i^{11}\ge 0.
\end{align}
Therefore, every factor on the right-hand side of \eqref{eq:pulse_difference_formula} is nonnegative, and so
$\bar{\delta}_i^{\,m}-\bar{\delta}_i \geq 0.$
Thus, $y_i(k\tau^+) \geq y_i(k\tau^-)$ for all $i = 1, 2, \dots, n$.
\end{proof}
We will assume throughout, then, that $\bar{\delta}_i>0$.
\section{Analysis}\label{analysis}
Our model is a coupled ecological and evolutionary model with a pulse disturbance that has a selective and demographic effect.  
We partition our analysis by presenting results on different components of this model and then for the coupled system. More specifically, we first review existing results on the pure ecological components. Then, we present results for the pure evolutionary components. Finally, we present an analysis of the joint coupled system.

\subsection{Pure Ecology}
In this section, we review some well-known results as they apply to our ecological model (assuming evolutionary states are fixed).  We include them for the sake of completion and since we rely on them for analysis of the coupled systems. 

\subsubsection{Without Pulse}
Given any matrix $A$, it is well known that solutions of differential equations of the form

\begin{equation}\label{eco_nopulse}
\frac{dx}{dt} = Ax
\end{equation}
are $x(t) = e^{At}x(0)$.  Additionally, defining the spectral abscissa of $A$, $s(A)$, as the largest real part of the eigenvalues, then we can determine asymptotic behaviors of solutions.  Mainly, that if $s(A)>0$, then $x(t)$ diverges provided $x(0)\neq 0$.  Conversely, if $s(A)<0$, then $x(t)\rightarrow 0$, i.e., the population goes extinct.  

Furthermore, if, in addition, $A$ is essentially non-negative (non-negative off the diagonal) and irreducible, then, by an application of the celebrated Perron-Frobenius theorem, we get that $s(A)$ is an eigenvalue of $A$ and that the associated eigenvector $v = (v_1, v_2, ... v_n)^T$ is positive and the only positive eigenvector. From this, we have that for any solution $x(t)$ of \ref{eco_nopulse} with an initial condition $x(0)>0$, the following is true

\begin{equation}
    \lim_{t\rightarrow \infty} \frac{x_i(t)}{x_j(t)} = \frac{v_i}{v_j}
\end{equation}
for any stages $i$ and $j$. We use this later in the analysis of the coupled eco-evo pulsed system.

\subsubsection{With Pulse}
We next review some recent results on linear impulse systems of the form

\begin{align}
\frac{dx}{dt} &= Ax\\
x(k\tau^+) &= Dx(k\tau^-) 
\end{align}
where $D$ is a diagonal matrix. 

In this case, it was shown in \cite{patel2024spectral} that the asymptotic behavior of the impulsed system is determined by the spectral radius of matrix $De^{\tau A}$, denoted $r(De^{\tau A})$.  If $r(De^{\tau A})>1$, then $x(t)$ diverges, while if $r(De^{\tau A})<1$, then $x(t)\rightarrow 0$. This distinction is critical in determining the effects of resistance evolution. As the frequency of resistance alleles increase, entries in the diagonal matrix $D$ increase and hence, may lead to qualitatively different asymptotic behaviors. 

We numerically examined the relationship between the life cycle transitions (parameters in matrix $A$) and the stage-dependent impact of the disturbance (parameters in $D$) for a three-stage population (Figure~\ref{fig:spectral_radius}). We considered four parameter combinations to examine the effects of the transition rates $\alpha_{12}$ and $\alpha_{23}$ and the fecundity rate $\lambda$. Panels (A) and (B) use $\lambda=0.35$, while panels (C) and (D)
use $\lambda=0.40$. Within each pair, left side figures use $\alpha_{12} = 0.2$ and $\alpha_{23} = 0.45$, with the values flipped for right side figures. This allows us to compare the impact of fast and slow transition rates between stage, along with the impact of varying fecundity rates. Persistence boundaries at $r\left(De^{\tau A}\right)$ are plotted for various fixed levels of $\delta_3$ across axes of $\delta_1$ and $\delta_2$. Points above the boundary will result in persistence, while points below will result in extinction.

\begin{figure}[h]
    \centering
    \includegraphics[width=\textwidth]{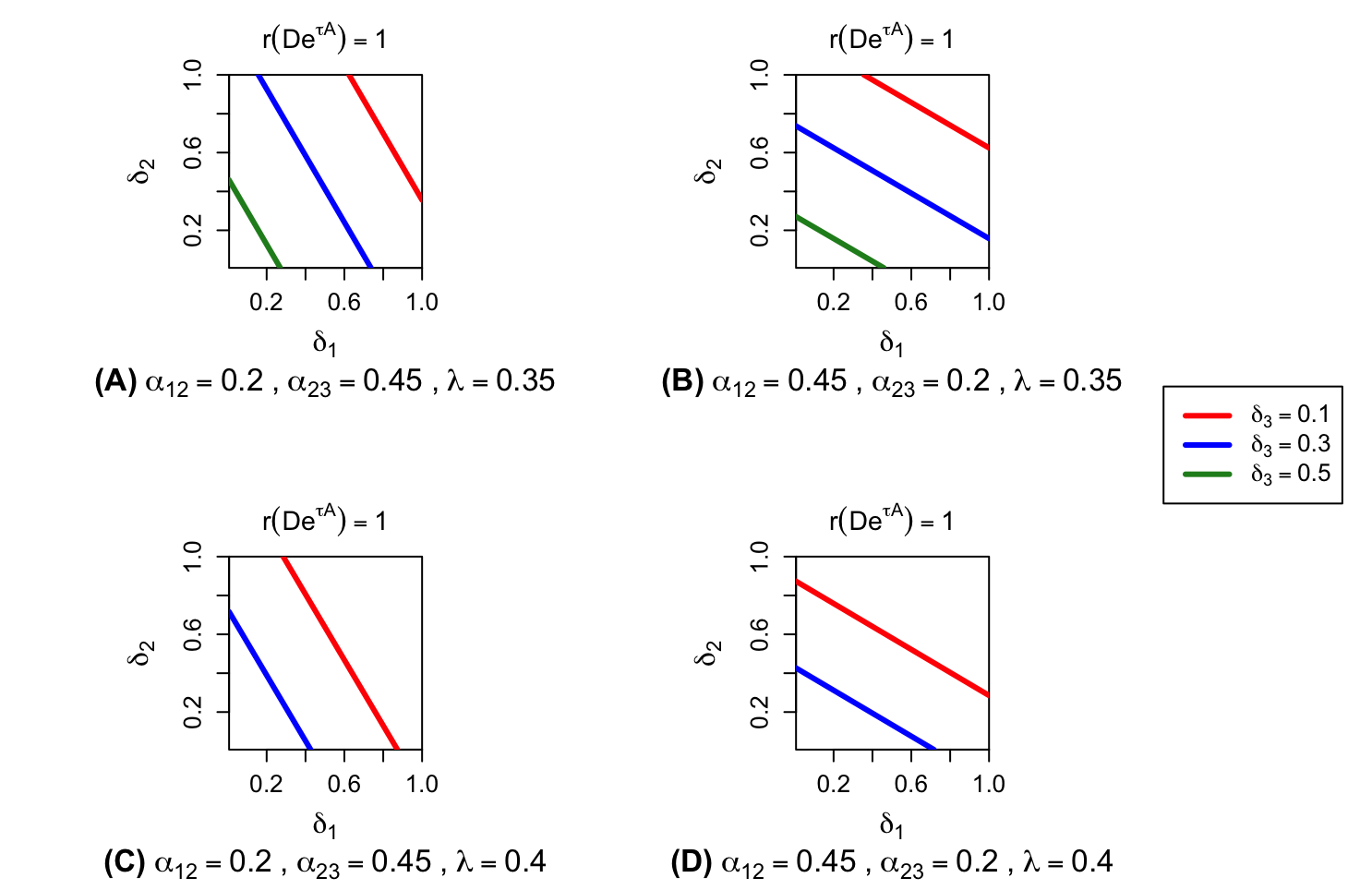}
    \caption{Persistence boundaries satisfying
    $r(De^{\tau A})=1$ in the $(\delta_1,\delta_2)$ plane for three fixed
    values of the Stage 3 survival probability, $\delta_3\in\{0.1,0.3,0.5\}$. The horizontal axis represents the Stage 1 survival probability ($\delta_1$), and the vertical axis represents the Stage 2 survival probability ($\delta_2$). Each contour represents the persistence threshold for the specified value of $\delta_3$, with points above the curve corresponding to $r(De^{\tau A})>1$ and persistence, and points below the curve corresponding to $r(De^{\tau A})<1$ and extinction. Panels (A) and (B) use $\lambda=0.35$, while panels (C) and (D) use $\lambda=0.40$; within each pair, the transition rates $\alpha_{12}$ and $\alpha_{23}$ are interchanged.}
    \label{fig:spectral_radius}
\end{figure}

Across all four panels of Figure \ref{fig:spectral_radius}, the persistence boundaries have negative slopes, indicating a compensatory relationship between survival in Stages 1 and 2. Along the threshold $r(De^{\tau A})=1$, an increase in survival in one of these stages can compensate for a decrease in survival in the other. Additionally, greater
survival in Stage 3 shifts the persistence boundary downward, allowing the population to persist even at lower combinations of $\delta_1$ and $\delta_2$. Conversely, when Stage 3 survival is low, greater survival in the first two stages is required for persistence.

In panel (A), where $0.20 = \alpha_{12}<\alpha_{23}=0.45$, the contours are relatively shallow, indicating that the persistence threshold is more sensitive to changes in Stage 2 survival than to changes in Stage 1 survival. Individuals move more slowly from Stage 1 into Stage 2 but more rapidly from Stage 2 into the reproductive stage. Consequently, survivors in Stage 2 contribute more directly to recruitment into Stage 3, making changes in $\delta_2$ particularly influential for persistence. The opposite pattern occurs in panel (B), where $0.45=\alpha_{12}>\alpha_{23}=0.20$. Here, the contours are steeper, indicating greater sensitivity of the persistence threshold to Stage 1 survival. Individuals move relatively quickly from Stage 1 into Stage 2 but more slowly from Stage 2 into the reproductive stage. Under this transition structure, reductions in $\delta_1$ have a stronger effect on the persistence threshold than comparable reductions in $\delta_2$. The same pattern appears when comparing panels (C) and (D).

Comparisons of panels (A) and (C), and similarly panels (B) and (D), isolate the effect of increasing fecundity from $\lambda=0.35$ to $\lambda=0.40$. Increasing $\lambda$ shifts the persistence boundaries downward, so that persistence can occur at lower combinations of Stage 1 and Stage 2 survival, and making extinction solutions more difficult to attain. In fact, at the $\delta_3 = 0.5$ or 50\% survival rate in stage three, it is no longer possible to achieve extinction. Thus, both the rates of progression through the life cycle and fecundity influence the combinations of stage-specific survival probabilities required for population persistence.

\subsection{Pure Evolution}
In this section, we provide results for the evolutionary dynamics.  That is, we assume that the population densities are all constant in time, i.e., $x(t)=x \in \mathbb{R}^n_+$.  

\subsubsection{Without Pulse}
In the continuous dynamics, we observe that the rows in the matrix $A_{\text{evo}}$ sum to $0$, so it can be viewed as an infinitesimal generator of an (irreducible) continuous-time Markov chain. From this, we observe that 
there are infinitely-many equilibrium solutions for $y$, along the line formed by $y_1=y_2=...=y_n$ in $[0,1]^n$.  

The following result shows that the asymptotic behavior of the frequencies of resistance in each stage is to approach one of these equilibrium points.  Which equilibrium depends on the initial frequencies and we provide an expression. 

\begin{theorem}\label{evo_1}
Solutions to the differential equation
\begin{equation}
    \frac{dy}{dt} = A_{\text{evo}}y
    \label{eq:evo}
\end{equation}
with initial condition $y(0)=(y_{o,1}, y_{o,2},..., y_{o,n})$ satisfy 
\begin{equation}
    \lim_{t\rightarrow \infty} y_i(t) = \frac{\sum_{j=1}^n\prod_{i\neq j} [A_{\text{evo}}]_{ii}y_{o,j}}{\sum_{j=1}^n\prod_{i\neq j} [A_{\text{evo}}]_{ii}}
    \label{eq:evo_limit}
\end{equation}
for all $i$.
\end{theorem}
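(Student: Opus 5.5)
The plan is to treat $x$ as a fixed positive vector, so that $A_{\text{evo}}$ is a constant matrix, and combine a conserved quantity with the Perron--Frobenius facts recalled above. Write $a_i := -[A_{\text{evo}}]_{ii}>0$. Then $a_1=\lambda x_n/x_1$ and $a_{i}=\alpha_{i-1,i}x_{i-1}/x_i$ for $i\ge 2$. Row $i$ of $A_{\text{evo}}$ has $-a_i$ on the diagonal and $+a_i$ in column $i-1$, with indices taken cyclically so that row $1$ feeds from column $n$. All other entries are zero. Hence $A_{\text{evo}}\mathbf{1}=0$, where $\mathbf{1}=(1,\dots,1)^T$, and the matrix is essentially non-negative and irreducible, since its nonzero pattern is the cycle $1\to 2\to\cdots\to n\to 1$.

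\textbf{Step 1 (left null vector).} I will find $\pi$ with $\pi^T A_{\text{evo}}=0$. Column $j$ of $A_{\text{evo}}$ has $-a_j$ in row $j$ and $a_{j+1}$ in row $j+1$ (cyclically). So the condition reads $\pi_j a_j=\pi_{j+1}a_{j+1}$ for all $j$, which forces $\pi_j a_j$ to be constant, i.e. $\pi_j\propto 1/a_j\propto \prod_{i\neq j}a_i$. Replacing $a_i$ by $[A_{\text{evo}}]_{ii}$ multiplies every such product by the same sign $(-1)^{n-1}$. That sign cancels in the ratio appearing in \eqref{eq:evo_limit}, so I may take $\pi_j=\prod_{i\ne j}[A_{\text{evo}}]_{ii}$ there.

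\textbf{Step 2 (conservation).} Along solutions, $\frac{d}{dt}\pi^T y=\pi^T A_{\text{evo}} y=0$. Therefore $\pi^T y(t)=\pi^T y(0)=\sum_j \pi_j y_{o,j}$ for all $t$.

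\textbf{Step 3 (convergence to the diagonal).} This is the main step. Choose $c>\max_i a_i$, so that $B=A_{\text{evo}}+cI$ is non-negative and irreducible. Since $B\mathbf{1}=c\mathbf{1}$ with $\mathbf{1}>0$, Perron--Frobenius gives $r(B)=c$, simple, with Perron eigenvector $\mathbf{1}$. Consequently $0=s(A_{\text{evo}})$ is a simple eigenvalue. Every other eigenvalue $\mu$ satisfies $|\mu+c|\le c$ with $\mu\neq 0$. To conclude $\operatorname{Re}\mu<0$, I need to exclude purely imaginary $\mu\ne 0$. Gershgorin's discs $|z+a_i|\le a_i$ meet the imaginary axis only at $0$, which settles this.

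With the spectrum controlled, $e^{tA_{\text{evo}}}\to \mathbf{1}\pi^T/(\pi^T\mathbf{1})$, the spectral projection onto the kernel. Hence $y(t)\to \mathbf{1}\,\frac{\pi^T y(0)}{\pi^T\mathbf{1}}$. The same conclusion can be reached without the projection formula: decompose $y(0)$ along $\mathbf{1}$ plus the invariant complement $\{\pi^T z=0\}$, on which the flow decays exponentially. Either way, combining with Step 2 identifies the common limit as $\sum_j\pi_j y_{o,j}/\sum_j\pi_j$, which is exactly \eqref{eq:evo_limit} for every $i$.

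The only delicate point is the spectral gap in Step 3, which requires ruling out nonzero imaginary eigenvalues. Irreducibility together with the Gershgorin disc geometry handles it, so I do not expect a real obstacle.
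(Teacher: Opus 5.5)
Your proof is correct and follows the paper's route: the conserved quantity $\sum_j \prod_{i\neq j}[A_{\text{evo}}]_{ii}\,y_j$ (your left null vector $\pi$), combined with Perron--Frobenius making $0$ the simple dominant eigenvalue with eigenvector $\mathbf{1}$, and you also supply the spectral-gap and projection details that the paper compresses into ``the result follows.'' The Gershgorin step is harmless but redundant, since the disc $|\mu+c|\le c$ you already obtained meets the imaginary axis only at $0$.
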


\begin{proof}

To prove this result, we first define the conserved quantity:
\begin{equation}
    z = -\sum_{j=1}^n\prod_{i\neq j} [A_{\text{evo}}]_{ii}y_j
    \label{eq:conserved_quantity}
\end{equation}

This is straightforward to show, by writing down the time derivative of $z$ and showing it equals zero. These constants of motion form $(n-1)-$dimensional (hyper)planes, which are forward invariant subspaces and compact in $[0,1]^n$. 

Secondly, we note that 
\begin{equation}
    \lim_{t\rightarrow \infty} y_i(t) = \lim_{t\rightarrow \infty} y_{i-1}(t)
    \label{eq:yi_equal_limit}
\end{equation}
for all $i$ (and where subscripts are mod $n$).    
Notice that $A_{evo}$ is an essentially non-negative matrix (non-negative off the diagonal) with a right-most eigenvalue of zero and associated eigenvector of 1 (again from Perron-Frobenius). 
Then, the result follows.

\end{proof}

We show a graphical version of this for the case $n=2$ in Figure \ref{fig:constants_of_motion}.  
\begin{figure}[H]
    \centering

    \begin{subfigure}[t]{0.4\textwidth}
        \centering
        \includegraphics[width=\textwidth]{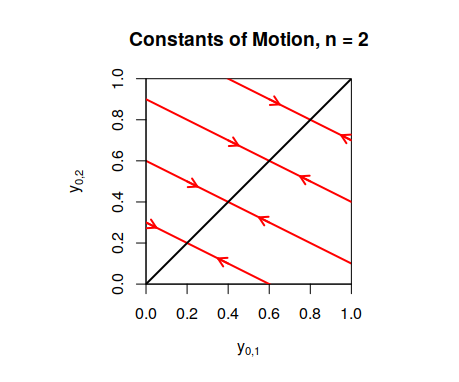}
        \caption{$\alpha_{12} = 0.2$, $\lambda = 0.4$}
        \label{fig:slow}
    \end{subfigure}
    \begin{subfigure}[t]{0.4\textwidth}
        \centering
        \includegraphics[width=\textwidth]{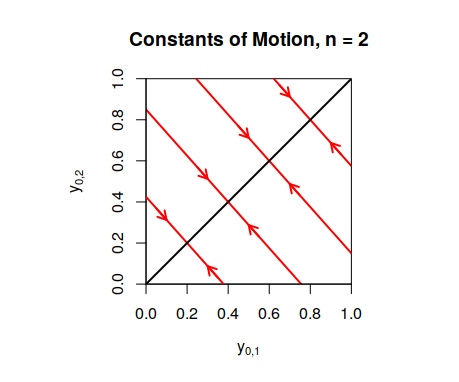}
        \caption{$\alpha_{12} = 0.45$, $\lambda = 0.4$}
        \label{fig:fast}
    \end{subfigure}
    \caption{A graphical representation of Result \ref{evo_1} in two dimensions for two cases. Subfigure (A) reflects the slow transition rate between the two stages in Figure \ref{fig:spectral_radius}, whereas Subfigure (B) reflects the fast transition rate.  The black line is $y_{0,1} = y_{0,2}$ and the red lines are the conserved quantity as in equation (\ref{eq:conserved_quantity}) for different fixed values of $z$.}
    \label{fig:constants_of_motion}
\end{figure}

Additionally, for generalizing, it helps to see the expression written out for the case when $n=3$. That is

\begin{align}
\lim_{t\rightarrow \infty} y_i(t)= \frac{(\alpha_{12}\alpha_{23}\frac{x_1}{x_3})y_{o,1} + ( \alpha_{23}\lambda\frac{x_2}{x_1})y_{o,2} + (\lambda\alpha_{12}\frac{x_3}{x_2}) y_{o,3}}{\alpha_{12}\alpha_{23}\frac{x_1}{x_3}+ \alpha_{23}\lambda\frac{x_2}{x_1}+ \lambda\alpha_{12}\frac{x_3}{x_2}}
\end{align}
where $\lambda$ and $\alpha_{ij}$ are as in equation \ref{eq:A_eco}.
What this shows is that allele frequencies will ``equalize" during the continuous evolutionary dynamics as individuals mix between stages. We observe also that the transition rates affect whether the frequency in the first stage has a bigger influence on the final frequency than the second stage or vice versa. (The more vertical the contours in Figure \ref{fig:constants_of_motion}, the less the frequency in stage 1 changes from the mixing.)

\subsubsection{With Pulse}

\begin{theorem}\label{evo_wpulse}
    Assume there is at least one $1 \leq j \leq n$ such that selection coefficient $s_j>0$. Then if fixed population density vector $x > 0$, solutions to the impulse differential equation 
\begin{align}
\frac{dy}{dt} &= A_{\text{evo}}y\\
    y(k\tau^+) &= D_{\text{evo}}(y)y(k\tau^-)
\end{align}
with initial condition $y(0)\neq 0$ satisfy 
\begin{equation}
    \lim_{t\rightarrow \infty} y_i(t) = 1
    \label{eq:evo_pulse_limit}
\end{equation}
    for all $1 \leq i \leq n$
\end{theorem}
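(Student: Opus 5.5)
Our plan is a Lyapunov-type argument built on the conserved quantity from the proof of Theorem \ref{evo_1}. Write $w_j = \prod_{i\neq j}[A_{\text{evo}}]_{ii}$ up to the common sign, so that $w_j>0$ for every $j$ (the diagonal entries of $A_{\text{evo}}$ are strictly negative because $x>0$). Define
\[
z(y) = \frac{\sum_{j=1}^n w_j y_j}{\sum_{j=1}^n w_j},
\]
which is a weighted average of the $y_j$ and lies in $[0,1]$. By Theorem \ref{evo_1}, $z$ is constant along the continuous flow. At each pulse, Lemma \ref{result_nondec_pulse} gives $y_i(k\tau^+)\ge y_i(k\tau^-)$ for every $i$, so $z$ cannot decrease. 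Hence $z_k := z(y(k\tau^+))$ is a nondecreasing sequence bounded by $1$, and it converges to some $z^*\le 1$. The task is then to show $z^*=1$. Since $z=1$ forces $y_i=1$ for all $i$, the conclusion $y_i(t)\to 1$ follows, using that $z$ is constant between pulses and that each $y_i\le 1$.

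Next we quantify the increase of $z$ at a pulse. From \eqref{eq:pulse_difference_formula}, the jump in stage $j$ is
\[
y_j^+-y_j^- = \frac{y_j^-(1-y_j^-)}{\bar\delta_j}\Bigl[y_j^-(\delta_j^{22}-\delta_j^{12})+(1-y_j^-)(\delta_j^{12}-\delta_j^{11})\Bigr].
\]
Let $j$ be the stage with $s_j>0$. By \eqref{eq:HW}, $\delta_j^{22}-\delta_j^{12}=sh\,\delta_j^{22}$ and $\delta_j^{12}-\delta_j^{11}=s(1-h)\delta_j^{22}$. The bracket is therefore at least $s_j\min(h,1-h)\delta_j^{22}$, which is positive for $h\in(0,1)$. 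At the endpoints $h=0$ or $h=1$ one coefficient vanishes, and a separate bound using $y_j$ bounded away from $0$ is needed. Since the $\bar\delta_j$ are bounded above, this gives $z_k-z_{k-1}\ge c\,\phi(y_j(k\tau^-))$, where $c>0$ and $\phi$ is continuous on $[0,1]$ and vanishes only at $0$ and $1$.

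The remaining step is the main obstacle: showing that $y_j(k\tau^-)$ cannot stay near $0$ or $1$ unless $z^*=1$. We will use the flow between pulses. The continuous part is linear, $y(t)=e^{tA_{\text{evo}}}y(k\tau^+)$. The matrix $e^{\tau A_{\text{evo}}}$ is stochastic with all entries strictly positive, because $A_{\text{evo}}$ is irreducible (its nonzero pattern is the life-cycle cycle). Consequently every coordinate of $y((k+1)\tau^-)$ is a strictly positive weighted average of the coordinates of $y(k\tau^+)$, with all weights at least some $\epsilon>0$.

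This averaging property gives two bounds. First, $y_j((k+1)\tau^-)\ge \epsilon\max_i y_i(k\tau^+)\ge \epsilon z_0$, and $z_0>0$ because $y(0)\ne0$ and $y\ge 0$. This keeps $y_j$ away from $0$ for $k\ge1$, which also handles the endpoint cases of $h$. Second, $1-y_j((k+1)\tau^-)\ge \epsilon\,\max_i(1-y_i(k\tau^+))\ge \epsilon(1-z_k)\ge \epsilon(1-z^*)$. If $z^*<1$, this keeps $y_j$ bounded away from $1$. In that case $\phi(y_j(k\tau^-))$ is bounded below by a positive constant for all $k$, so the increments of $z$ are uniformly positive and $z_k\to\infty$, a contradiction. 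Therefore $z^*=1$, completing the argument.

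The care needed is in the $h\in\{0,1\}$ cases and in the positivity of $e^{\tau A_{\text{evo}}}$. The latter is a standard fact for irreducible Metzler matrices.
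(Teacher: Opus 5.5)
Your proof is correct, and it takes a genuinely different route from the paper.

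The paper works with $V(y)=\min_i y_i$. This is nondecreasing along the flow, because at a minimizing index $j$, $\dot y_j$ is a positive multiple of $y_{j-1}-y_j\ge 0$. It is also nondecreasing across pulses by Lemma~\ref{result_nondec_pulse}. The paper then concludes $V\to 1$ by noting that the only fixed values are $0$ and $1$ and excluding $0$. That last step is never made quantitative. Monotonicity gives convergence of $V$ to some limit, but nothing in the argument rules out a limit in $(0,1)$. Moreover, a pulse that raises $y_j$ does not raise $\min_i y_i$ unless $j$ happens to be the minimizer.

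Your normalized conserved quantity $z$ from Theorem~\ref{evo_1} is constant between pulses but registers an increase in every coordinate. That lets you turn Lemma~\ref{result_nondec_pulse} into a per-pulse increment bound. The strict positivity of the stochastic matrix $e^{\tau A_{\text{evo}}}$ then traps $y_j(k\tau^-)$ in a compact subset of $(0,1)$ whenever $z^*<1$. Together these supply exactly the step the paper leaves informal.

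What the paper's choice buys is portability. The $\min$ argument uses only positivity of the off-diagonal coefficients, so it carries over verbatim to Theorem~\ref{result_4}, where $x(t)$ varies (along with the same informal final step). Your weights $w_j$ depend on $x$, so $z$ is no longer conserved in that setting.

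A minor simplification is available. Since $\bar\delta_j\le\delta_j^{22}$, you may take $c=s_jw_j/\sum_i w_i$ and $\phi(y)=y(1-y)\bigl[yh+(1-y)(1-h)\bigr]$. This $\phi$ vanishes only at $y\in\{0,1\}$ for every $h\in[0,1]$, so the case split on $h$ is unnecessary. Note also that for $h=0$ the degenerate factor vanishes at $y_j=1$, not $y_j=0$; your two-sided bound on $y_j(k\tau^-)$ already covers this.
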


\begin{proof}
Let $V(y) = \displaystyle\min_{1 \leq i \leq n}\{y_i\}$.  Then, $V$ is differentiable with respect to time for all values of $t \neq k\tau$ with the possible exception of at most countably many points  where $V$ switches from $V = y_i$ to $V = y_j$ for some $i \neq j$, as $y$ is smooth component-wise at $t \neq k\tau$. Moreover, we claim $\frac{dV}{dt}\geq 0$ where it exists. To prove this assertion, by way of contradiction, suppose that for some value of $t$, $V(y) = y_j$ and that $\frac{dV}{dt} = \frac{dy_j}{dt} < 0$. By the formulation of $A_{evo}$, (illustrated in three dimensions in equation (\ref{eq:A_evo})) 
\begin{align}
    \frac{dy_j}{dt} = k(y_i - y_j)
\end{align}
for a positive constant value $k$ and $y_i \neq y_j$. Hence, if $\frac{dy_j}{dt} < 0$, it must be true that $y_i < y_j$, and thus $V \neq y_j$. 

Furthermore, $V(k\tau^+)\geq V(k\tau^-)$, since the effects of the pulse disturbance on $y$ are always non-decreasing component-wise as shown in Result \ref{result_nondec_pulse}. Hence, since $s_j>0$ for at least one $j$, $V$ is a non-decreasing function in time, for which the only possible fixed points with respect to our pulsed model are at $V=0$ or $V=1$. However, if at least one $y_j(0) > 0$, then by the coupled structure of $A_{evo}$, for each $y_i$ there exists some $t$ such that $\frac{dy_i}{dt} > 0$. So, $V = 0$ is only a fixed point if $y(0) = 0$. Thus, provided $y(0) \neq 0$, $\displaystyle \lim_{t \to \infty} V  = 1$  which implies $\displaystyle\lim_{t \to \infty} y_i = 1$ for all $i$.
\end{proof}

\subsection{Eco-Evolutionary Dynamics}
In this section, we analyze the coupled ecological and evolutionary dynamics. 
\subsubsection{Without Pulse}
We first examine the dynamics of the eco-evolutionary model without the pulse
\begin{align}\label{eq:eco_evo}
\frac{dx}{dt} &= A_{\text{eco}}x \\
\frac{dy}{dt} &= A_{\text{evo}}(x) y \label{eq:eco_evo2} 
\end{align}\\
Notice that without the pulse, these are decoupled. That is, the ecological dynamics are not affected by the evolutionary dynamics.  (This is because of our assumption in this baseline model that the genotypes only impact the response to the pulsed disturbance and have no effect on the life history.) From our results in part 1 (pure ecology), the solution $x(t)$ to (\ref{eq:eco_evo}-\ref{eq:eco_evo2}) with initial conditions $x(0)>0$ approaches the solution $x^*(t) = ve^{\eta t}$ where $v$ and $\eta$ are the eigenpair guaranteed by the Perron-Frobenius theorem. Hence, we use the theory of asymptotically autonomous systems and examine 

\begin{align}\label{eq:asymptotic}
\frac{dy}{dt} &= A_{\text{evo}}(x^*) y
\end{align}
to determine limiting behaviors of solutions $y(t)$ to (\ref{eq:eco_evo}-\ref{eq:eco_evo2}). Before the result, we restate an important theorem.
\begin{theorem}[Markus \cite{Markus1956} as stated in Thieme \cite{Horst1994}]\label{thm_markus}
    For an asymptotically autonomous ordinarily differential equation 
    \begin{align}\label{eq:asymptotic_def}
        \dot{x} = f(t, x) 
    \end{align}
    with limiting equation 
    \begin{align}\label{eq:limiting}
        \dot{y} = g(y) 
    \end{align}
    then the $\omega-$limit set of a forward bounded solution $x$ of \eqref{eq:asymptotic_def} is non-empty, compact and connected. Moreover, $\omega$ attracts $x$, or 
    \begin{align}
        \text{dist}(x(t), \omega) \to 0
    \end{align}
    as $t$ approaches infinity. Finally, $\omega$ is invariant under \eqref{eq:limiting}. In particular, any point in $\omega$ lies on a full orbit of \eqref{eq:limiting} that is contained in $\omega$. 
\end{theorem}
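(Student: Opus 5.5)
The statement is the classical result of Markus, which the paper quotes rather than proves. I would prove it directly, assuming the standard hypotheses behind the phrase ``asymptotically autonomous'': $f$ and $g$ are continuous, $g$ is locally Lipschitz, and $f(t,x)\to g(x)$ as $t\to\infty$, uniformly for $x$ in compact sets. Let $x$ be a forward bounded solution on $[t_0,\infty)$. Its closure $K=\overline{\{x(t):t\ge t_0\}}$ is then compact. I would write
\[
\omega=\bigcap_{T\ge t_0}\overline{\{x(t):t\ge T\}}.
\]

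\textbf{Step 1: $\omega$ is nonempty and compact.} The set $\omega$ is an intersection of a nested family of nonempty compact subsets of $K$. It is therefore nonempty and compact, and this step uses nothing about the differential equation.

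\textbf{Step 2: $\omega$ attracts $x$.} Suppose $\mathrm{dist}(x(t),\omega)\to 0$ fails. Then there are $\varepsilon>0$ and times $t_k\to\infty$ with $\mathrm{dist}(x(t_k),\omega)\ge\varepsilon$. By compactness of $K$, a subsequence of $x(t_k)$ converges to some point. That point lies in $\omega$ by definition, yet it is at distance at least $\varepsilon$ from $\omega$, a contradiction.

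\textbf{Step 3: $\omega$ is connected.} Suppose $\omega=A\cup B$ with $A,B$ disjoint, nonempty and closed, hence at positive distance $d$. The curve $x(t)$ comes within $d/3$ of both $A$ and $B$ at arbitrarily large times. By continuity it must also be at distance at least $d/3$ from both at arbitrarily large times. Such points have an accumulation point in $K$, which lies in $\omega$ but in neither $A$ nor $B$, a contradiction.

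\textbf{Step 4: invariance, the main obstacle.} This is the only step that uses the equation. Fix $p\in\omega$ and times $t_k\to\infty$ with $x(t_k)\to p$. Define the shifted functions $x_k(s)=x(t_k+s)$ for $s\ge t_0-t_k$. They satisfy the integral equation
\[
x_k(s)=x_k(0)+\int_0^s f(t_k+r,x_k(r))\,dr .
\]
Their values lie in $K$, and $f$ is bounded on $[t_0,\infty)\times K$ (this follows from the uniform convergence to $g$, which is bounded on the compact set $K$). Hence the family $\{x_k\}$ is uniformly bounded and equi-Lipschitz on every interval $[-T,T]$, once $k$ is large.

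By Arzel\`a--Ascoli and a diagonal extraction over $T=1,2,\dots$, a subsequence converges locally uniformly on $\mathbb{R}$ to a continuous function $y$ with $y(0)=p$. The delicate point is passing to the limit inside the integral. For this I would split
\[
f(t_k+r,x_k(r))-g(y(r))=\bigl[f(t_k+r,x_k(r))-g(x_k(r))\bigr]+\bigl[g(x_k(r))-g(y(r))\bigr].
\]
The first bracket tends to $0$ uniformly in $r\in[-T,T]$, by the locally uniform convergence $f\to g$ on $K$. The second tends to $0$ by uniform continuity of $g$ on $K$. Hence $y(s)=p+\int_0^s g(y(r))\,dr$, so $y$ is a full solution of \eqref{eq:limiting} through $p$.

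Finally, each $y(s)$ is a limit of $x(t_k+s)$ with $t_k+s\to\infty$, so $y(s)\in\omega$ for every $s$. This shows that the full orbit through $p$ lies in $\omega$. The local Lipschitz property of $g$ gives uniqueness, so this orbit is \emph{the} orbit of $p$, and $\omega$ is invariant under \eqref{eq:limiting}.
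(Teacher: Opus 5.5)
Your proof is correct. The paper gives no proof of this statement: it quotes the result from Markus and Thieme and uses it as a black box, without saying what ``asymptotically autonomous'' means. What you have written is therefore the standard compactness argument behind those references, not a competitor to anything in the paper.

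Compared with the bare citation, your version buys a few things. It makes the hypotheses explicit: continuity, and $f(t,\cdot)\to g$ uniformly on compact sets. These are easy to check in the paper's application, where $f(t,y)=A_{\text{evo}}(x(t))y$ and $g(y)=A_{\text{evo}}(x^*)y$ are linear in $y$, and the coefficient matrices converge because $x_i(t)/x_j(t)\to v_i/v_j$. It also shows that Steps 1--3 use nothing but forward boundedness and continuity of $t\mapsto x(t)$. In Step 4 (shifts, Arzel\`a--Ascoli and your two-term splitting), a full orbit through each $p\in\omega$ lying in $\omega$ is obtained without any uniqueness. Local Lipschitz continuity of $g$ is used only to identify that orbit as \emph{the} orbit of $p$.

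Finally, it makes plain that the theorem delivers only a compact, connected, invariant $\omega$. In the paper's later use this gives $\omega\subseteq\mathcal{C}$. It does not by itself give convergence to a single point $c[1,\dots,1]^T$, since a connected subset of $\mathcal{C}$ can be a segment. Closing that step needs something extra, for instance that $\min_i y_i$ is nondecreasing and $\max_i y_i$ is nonincreasing along $\dot y=A_{\text{evo}}(x(t))y$.
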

With this theorem, we are able to show that all stages will approach a fully resistant population.
\begin{theorem}
The solution $y(t)$ to the system (\ref{eq:eco_evo}-\ref{eq:eco_evo2}) converges to $c[1, 1, \dots, 1]^T$ for some $c \in [0, 1]$. That is, $\displaystyle\lim_{t \to \infty} y_i(t) = \lim_{t \to \infty} y_j(t)$ for all $i, j \in 1, 2, \dots, n$.
\end{theorem}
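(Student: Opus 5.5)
We prove the statement by viewing the $y$-equation (\ref{eq:eco_evo2}) as an asymptotically autonomous system and applying Theorem \ref{thm_markus}. The key observation is that $A_{\text{evo}}(x)$ depends on $x$ only through the ratios $x_{i-1}/x_i$ (and $x_n/x_1$). These ratios are invariant under positive scaling of $x$. So although $x^*(t)=ve^{\eta t}$ is time dependent, $A_{\text{evo}}(x^*(t))=A_{\text{evo}}(v)$ is a constant matrix. By the pure ecology results, $x_i(t)/x_j(t)\to v_i/v_j$ for $x(0)>0$, since $A_{\text{eco}}$ is essentially nonnegative and irreducible. Hence
\[
\frac{dy}{dt}=A_{\text{evo}}(x(t))\,y=:f(t,y)\to A_{\text{evo}}(v)\,y=:g(y)
\]
uniformly for $y$ in the compact set $[0,1]^n$, and (\ref{eq:eco_evo2}) is asymptotically autonomous with limiting equation $\dot y=A_{\text{evo}}(v)y$. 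The solution $y(t)$ stays in $[0,1]^n$ by forward invariance, so it is forward bounded. Theorem \ref{thm_markus} then gives a nonempty, compact, connected $\omega$-limit set $\omega\subset[0,1]^n$ that attracts $y(t)$ and is invariant under the limiting flow.

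Next we show $\omega$ lies on the diagonal $\{c\mathbf 1\}$. We use the Lyapunov-type functions from the proof of Theorem \ref{evo_wpulse}: $V(y)=\min_i y_i$ and $W(y)=\max_i y_i$. Both are monotone along the nonautonomous flow, with $V$ nondecreasing and $W$ nonincreasing. This holds because each row of $A_{\text{evo}}(x(t))$ has the form $\dot y_i=k_i(t)(y_{i-1}-y_i)$ with $k_i(t)>0$, so the argument from that proof applies verbatim at every $t$, with no pulses here. Hence $V(y(t))\to V^*$ and $W(y(t))\to W^*$, so $V\equiv V^*$ and $W\equiv W^*$ on $\omega$.

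Take $p\in\omega$ and a full orbit $\phi(t)$ of the limiting system through $p$ contained in $\omega$. Then $\max_i\phi_i(t)-\min_i\phi_i(t)\equiv W^*-V^*$ for all $t$. By Theorem \ref{evo_1}, $\phi(t)$ converges to the diagonal as $t\to\infty$, because $A_{\text{evo}}(v)$ is an irreducible generator with simple zero eigenvalue, eigenvector $\mathbf 1$, and all other eigenvalues in the open left half-plane. Therefore $\max\phi_i-\min\phi_i\to0$, which forces $W^*=V^*=:c$. Thus $\omega\subset\{c\mathbf 1\}$, and since $\omega$ is nonempty it equals $\{c\mathbf 1\}$. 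So $y(t)\to c\mathbf 1$, with $c\in[0,1]$ because $[0,1]^n$ is invariant.

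We expect the main technical point to be verifying the hypotheses of Theorem \ref{thm_markus} rigorously. This requires local uniform convergence $f(t,y)\to g(y)$, which follows from $x_i/x_j\to v_i/v_j$ together with linearity in $y$. It also requires a precise treatment of the monotonicity of $\min$ and $\max$ at the countably many nondifferentiable switching times; we handle this with Dini derivatives, as in Theorem \ref{evo_wpulse}.
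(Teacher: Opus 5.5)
Your proof is correct. Up to the identification of the $\omega$-limit set it uses the same framework as the paper: $A_{\text{evo}}$ depends only on the ratios $x_i/x_j$, so the limiting matrix $A_{\text{evo}}(x^*)=A_{\text{evo}}(v)$ is constant, the convergence is locally uniform, and Theorem~\ref{thm_markus} applies.

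The routes diverge in how $\omega$ is pinned down. The paper argues as follows: every point of $\mathcal{C}$ is an equilibrium, and all limiting orbits converge into $\mathcal{C}$ by Theorem~\ref{evo_1}. It infers that $\mathcal{C}$ is the only invariant set, hence $\omega\subset\mathcal{C}$, and concludes directly that $y(t)\to c\mathbf{1}$. You instead import the functionals $V=\min_i y_i$ and $W=\max_i y_i$, which are monotone along the nonautonomous flow. This makes them constant on $\omega$, and a single forward limiting orbit inside $\omega$ then forces $W^*=V^*$, so $\omega$ is a single point.

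Your version closes two gaps in the paper's argument.
\begin{itemize}
\item Forward convergence of limiting orbits does not by itself exclude compact invariant sets off $\mathcal{C}$; heteroclinic connections between equilibria are the standard obstruction. The claim is true here, because a bounded full orbit of $\dot y=A_{\text{evo}}(v)y$ must have zero component in the stable subspace, but the paper never says this.
\item $\omega\subset\mathcal{C}$ together with connectedness only yields a segment $\{c\mathbf{1}: c\in[a,b]\}$. The paper's final step, from ``$y$ approaches $\omega$'' to ``$y$ converges to a point,'' needs exactly the extra monotonicity you supply.
\end{itemize}

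Your argument also gives two by-products: it proves the forward invariance of $[0,1]^n$ that the paper asserts without proof, and it locates $c\in[\min_i y_i(0),\max_i y_i(0)]$. The paper's version is shorter but leaves these steps unjustified.
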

\begin{proof}
 First, we will show that $A_{evo}(x)$ converges uniformly to $A_{evo}(x^*)$, an autonomous matrix. Set $K = \displaystyle\max_{1\leq i \leq n-1}\{\lambda, \alpha_{i, i+1}\}$ and $x^*(t) = [x_1^*, x_2^*, \dots, x_n^*]^T$. Then 
\begin{align*}
\left| A_{evo}(x)y - A_{evo}(x^*)y\right| &= \left|\begin{bmatrix}
    \lambda \frac{x_n}{x_1}(y_n - y_1) \\
    \alpha_{12} \frac{x_1}{x_2} (y_1 - y_2)\\
    \vdots\\
    \alpha_{n-1,n} \frac{x_{n-1}}{x_n} (y_{n-1} - y_{n})  \\
\end{bmatrix} - \begin{bmatrix}
    \lambda \frac{x_n^*}{x_1^*}(y_n - y_1) \\
    \alpha_{12} \frac{x_1^*}{x_2^*} (y_1 - y_2)\\
    \vdots\\
    \alpha_{n-1,n} \frac{x_{n-1}^*}{x_n^*} (y_{n-1} - y_{n})  \\
\end{bmatrix} \right| \\
&\leq K \left| \begin{bmatrix}
    (\frac{x_n}{x_1} - \frac{x_n^*}{x_1^*}) (y_n - y_1) \\
    (\frac{x_1}{x_2} -  \frac{x_1^*}{x_2^*}) (y_1 - y_2)\\
    \vdots \\
    (\frac{x_{n-1}}{x_n} -  \frac{x_{n-1}^*}{x_n^*}) (y_{n-1} - y_{n})\\
\end{bmatrix}\right|\\
&\leq K \left| \begin{bmatrix}
    (\frac{x_n}{x_1} - \frac{x_n^*}{x_1^*})  \\
    (\frac{x_1}{x_2} -  \frac{x_1^*}{x_2^*})\\
    \vdots \\
    (\frac{x_{n-1}}{x_n} -  \frac{x_{n-1}^*}{x_n^*})\\
\end{bmatrix}\right|.
\end{align*}
Thus, showing that $A_{evo}(x)$ converges uniformly to $A_{evo}(x^*)$ is equivalent to showing that for all $\varepsilon > 0$, there exists $T > 0$ such that 
\begin{align}\label{eq:ep_result}
\left| \frac{x_i}{x_j} - \frac{x_i^*}{x_j^*}\right| < \varepsilon
\end{align}
for all $t > T$ and $i = j -1$ for $j = 2, 3, \dots, n$ or $i = n$ for $j = 1$. Note that as $x$ converges to $x^*$, it follows that 
\begin{align*}
    \lim_{t \to \infty} \frac{x_i}{x_i^*} = 1, \hspace{2mm} \lim_{t \to \infty} \frac{x_j^*}{x_j} = 1
\end{align*}
for all $i, j = 1, 2, \dots, n$. Thus, 
\begin{align*}
     \lim_{t \to \infty} \frac{x_i/x_j}{x_i^*/x_j^*} = \lim_{t \to \infty} \frac{x_i}{x_i^*}\left(\frac{x_j^*}{x_j}\right) = 1.
\end{align*}
Hence, $\displaystyle\lim_{t \to \infty} \dfrac{x_i}{x_j} =\dfrac{x_i^*}{x_j^*} $, and so there exists a $T_i > 0$ such that (\ref{eq:ep_result}) holds for $t > T_i$ for each $i, j$ pair. Choosing $T =\displaystyle \max_{i = 1, 2, \dots, n} T_i$ yields the necessary result to show that $A_{evo}(x)$ converges uniformly to $A_{evo}(x^*)$, and thus (\ref{eq:eco_evo}) is asymptotically autonomous. \\
Next, we consider the subsets of $[0, 1]^n$ which are invariant under the flow of (\ref{eq:asymptotic}). Note that the set $\mathcal{C} := \{y = c[1, 1, \dots, 1]^T : c \in[0, 1]\}$ forms an invariant set, as for each such $y$, $A_{evo}(x^*)y = \vec{0}$, and thus each is a fixed point. As shown in Result \ref{evo_1} in Section 3.2, solutions to the autonomous system converge to this limit. Hence, the only invariant subset of $[0,1]^n$ under system (\ref{eq:asymptotic}) is $\mathcal{C}$. 
As the $\omega-$limit sets of (\ref{eq:eco_evo}) are invariant under (\ref{eq:asymptotic}), they must be contained in $\mathcal{C}$ by Theorem \ref{thm_markus}. Furthermore, since any bounded solution $y(t)$ of (\ref{eq:eco_evo}) converges to the $\omega-$limit set, it follows that such solutions will converge to a vector of the form $c[1, 1, \dots, 1]^T$ for some $c \in [0,1]$. 
\end{proof}

\subsubsection{With Pulse}
\begin{theorem}\label{result_4}
    Assume there is at least one $1 \leq j \leq n$ such that selection coefficient $s_j>0$. Then if  $x(0) > 0$ and $y(0) \neq 0$,  solutions to the full model defined in (\ref{eq:pulse}) satisfy  $\displaystyle\lim_{t \to \infty} y(t)= 1$.
\end{theorem}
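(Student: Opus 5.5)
The plan is to reuse the Lyapunov-type function $V(y)=\min_{1\le i\le n} y_i$ from the proof of Theorem \ref{evo_wpulse}. The feature we need is that the monotonicity argument there never used that $x$ was constant. It only used that each row of $A_{\text{evo}}(x)$ has the form $\frac{dy_j}{dt}=k_j(t)\,(y_{j-1}-y_j)$, with $k_j(t)>0$ (indices mod $n$). Here $k_1(t)=\lambda x_n(t)/x_1(t)$ and $k_j(t)=\alpha_{j-1,j}x_{j-1}(t)/x_j(t)$. Since $\mathbb{R}^n_+\times[0,1]^n$ is forward invariant and $x(0)>0$, all $x_i(t)>0$, so these coefficients stay positive and finite on every bounded time interval.

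\textbf{Step 1: $V$ is nondecreasing.} Between pulses, at any time where the minimum is attained by $y_j$, we have $y_{j-1}\ge y_j$. So $\dot y_j\ge 0$ and $V$ cannot decrease. At pulse times, Lemma \ref{result_nondec_pulse} gives $y_i(k\tau^+)\ge y_i(k\tau^-)$ for every $i$, so $V(k\tau^+)\ge V(k\tau^-)$. Hence $V$ is nondecreasing and bounded by $1$, and $V(t)\to V^*\in[0,1]$.

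\textbf{Step 2: $V$ becomes positive.} Since $y(0)\neq 0$, some $y_j(0)>0$. Along the cycle $j\to j+1\to\cdots$ each component satisfies $\dot y_{i}\ge -k_i y_i$, so positivity propagates: on $(0,\tau)$, $y_i$ becomes positive once $y_{i-1}>0$, because $y_i(t)\ge \int_0^t e^{-\int_s^t k_i}k_i(s)y_{i-1}(s)\,ds$. Pulses preserve positivity. Therefore $V(t_0)>0$ for some $t_0$, and so $V^*>0$.

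\textbf{Step 3: $V^*=1$.} This is the main obstacle. Suppose $V^*<1$. We need a uniform positive increment of $V$ over each period, which requires uniform control on $x$ after the pulses. Since $\bar\delta_i\ge\delta_i^{11}>0$ and the pulses only rescale the stages by bounded factors, the stage ratios $x_i/x_j$ remain bounded above and below on every period. I would argue this via a Hilbert-metric/Birkhoff contraction for the positive irreducible matrices $D_{\text{eco}}(y)e^{\tau A_{\text{eco}}}$, which have uniformly bounded entries. Bounded ratios give $0<\underline k\le k_i(t)\le\overline k$ for all $t$.

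With these bounds, I would show that over each period the gap $\max_i y_i-\min_i y_i$ contracts in the continuous phase. At the pulse, the component $j$ with $s_j>0$ receives a gain of at least
\[
y_j(1-y_j)\,c\,\big/\,\bar\delta_j,
\]
where $c>0$ comes from \eqref{eq:pulse_difference_formula} with $\delta_j^{22}>\delta_j^{11}$. On the compact set $\{V^*/2\le y\le 1-\epsilon\}$ this gain is bounded below. Once the components are nearly equal, the continuous mixing transmits this gain to the minimum within a bounded time, using $\underline k$ and the irreducible cyclic structure. That yields $V((k+m)\tau)\ge V(k\tau)+\eta$ for a fixed $\eta>0$ whenever $V\le 1-\epsilon$, which contradicts the convergence of $V$.

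Hence $V^*=1$, and since $V\le y_i\le 1$, every $y_i(t)\to 1$.

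As an alternative route for Step 3, when the ratio bounds are hard to obtain, one can use that $x(t)/\|x(t)\|$ converges to a periodic profile, via Perron--Frobenius for the monodromy. This reduces the problem to an asymptotically periodic version of Theorem \ref{evo_wpulse}.
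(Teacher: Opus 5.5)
Your skeleton is the paper's: the same $V=\min_i y_i$, monotonicity between pulses from the sign of $k_j(t)(y_{j-1}-y_j)$ at a minimizing index, Lemma~\ref{result_nondec_pulse} at pulses, and positivity spreading from $y(0)\neq 0$.

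The two arguments part ways at the end. The paper stops once $V$ is nondecreasing and positive. It asserts $V\to 1$ because the only ``fixed points'' are $V=0$ and $V=1$, and for the coupled system it only notes that $x_i(t)/x_j(t)>0$ at each fixed $t$. That rules out neither $V$ stalling at some $V^*\in(0,1)$ nor the time-varying rates $k_i(t)$ degenerating. Your Step 3 supplies exactly what is missing:
\begin{itemize}
\item uniform bounds $\underline k\le k_i(t)\le\overline k$, which follow from strict positivity of $e^{\tau A_{\text{eco}}}$ (an irreducible Metzler generator) and $\bar\delta_i\in[\delta_i^{11},\delta_i^{22}]$;
\item a uniform-increment contradiction.
\end{itemize}
For the ratio bounds, the elementary estimate $(Mx)_i/(Mx)_l\le\max_m M_{im}/M_{lm}$ for a strictly positive $M$ is enough; you do not need Birkhoff contraction. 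Your route costs more work but actually closes, whereas the paper's version is only a sketch at precisely the point you identified as the main obstacle.

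One piece of Step 3 needs repair. Contraction of $\max_i y_i-\min_i y_i$ in the continuous phase does not by itself make the components ``nearly equal'': every pulse can re-widen the spread by the gain at stage $j$. What you need is that a spread forces the minimum up. On a continuous phase starting at $t_s$, set $w_i=y_i-V(t_s)\ge 0$, so that
\[
\dot w_i\ \ge\ \underline k\,w_{i-1}-\overline k\,w_i .
\]
Propagating this around the cycle shows that a spread $\ge\delta$ at $t_s$ gives $V(t_s+\tau^-)\ge V(t_s)+c\delta$, with $c=c(\underline k,\overline k,\tau,n)>0$.

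Now split on the spread at $k\tau^-$, with a fixed $\delta<1-V^*$:
\begin{itemize}
\item If the spread is $\ge\delta$, it was already $\ge\delta$ at $(k-1)\tau^+$, because the max is nonincreasing and the min nondecreasing between pulses. So $V$ rose by $c\delta$ over the previous period.
\item If the spread is $<\delta$, then $y_j(k\tau^-)\in[V(t_0),V^*+\delta]$, which is bounded away from $0$ and $1$. The pulse gain is then at least some $g_0>0$, and the same propagation raises $V$ by $cg_0$ over the next period.
\end{itemize}
Note that the bracket in \eqref{eq:pulse_difference_formula} is bounded below by $\min(y,1-y)\,s_j\delta_j^{22}$. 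It is not bounded below by a $y$-independent constant when $h\in\{0,1\}$, which is why you need the compact range of $y_j$. Either way $V$ gains a fixed amount every period, contradicting $V\le V^*$.

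Drop the alternative route. The one-period map $D_{\text{eco}}(y(k\tau))e^{\tau A_{\text{eco}}}$ changes with $y$, so a periodic limiting profile for $x/\|x\|$ presupposes the convergence of $y$ you are trying to prove.
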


\begin{proof}
Following the same proof as in Result \ref{evo_wpulse}, set $V(y) = \displaystyle\min_{1 \leq i \leq n}\{y_i\}$. As $\frac{dx}{dt}$ is a linear system for $t \neq k \tau$, it follows that $x_j \neq 0$ for all $j$ and $t \geq 0$ if $x(0) > 0$. Thus, $V$ is differentiable with respect to time for all values of $t \neq k\tau$ with the possible exception of at most countably many points where $V$ switches from $V = y_i$ to $V = y_j$ for some $i \neq j$. Also, $\frac{dV}{dt} \geq 0$ at points where the derivative exists by the same argument presented in Result \ref{evo_wpulse}, as if $t$ is fixed, $\frac{x_i(t)}{x_j(t)}$ is a positive constant value. The rest of the proof follows exactly from Result \ref{evo_wpulse}, as the pulse equation is identical and $\frac{dy}{dt}$ is still fully coupled. Hence, provided $y(0) \neq 0$, $\displaystyle \lim_{t \to \infty} V  = 1$  and so $\displaystyle\lim_{t \to \infty} y_i = 1$ for all $i$.
\end{proof}

\begin{theorem}\label{result_5}
    If $r(\tilde{D}e^{\tau A}) < 1$, where $\tilde{D}$ is the diagonal matrix $D_{eco}(1)$ and $A = A_{\text{eco}}$, then for any initial condition with $x(0)>0$ and $y_j(0) > 0$ for some $j$, we have $\displaystyle \lim_{t \to \infty} x(t)= 0$. 
\end{theorem}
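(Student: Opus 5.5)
The plan is to reduce the statement to the linear impulsive theory reviewed in the Pure Ecology subsection, combined with a comparison argument. By Result \ref{result_4}, $y(t)\to \mathbf{1}$, so the pulse matrices $D_{eco}(y(k\tau^-))$ converge to $\tilde D = D_{eco}(\mathbf{1})$. The ecological dynamics are therefore asymptotically those of the linear impulsive system $\dot x = Ax$, $x(k\tau^+)=\tilde D x(k\tau^-)$. By \cite{patel2024spectral}, that limiting system drives $x$ to $0$ when $r(\tilde D e^{\tau A})<1$.

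The first step is to pass to the discrete stroboscopic map. Writing $x_k = x(k\tau^+)$, we have
\[
x_{k+1} = D_k e^{\tau A} x_k, \qquad D_k := D_{eco}(y((k+1)\tau^-)).
\]
Because $A=A_{eco}$ is essentially nonnegative and irreducible, $e^{\tau A}$ is a positive matrix. Each $D_k$ is diagonal with entries in $(0,1]$. Between pulses, $\|x(t)\|\le \|e^{sA}\|\,\|x_k\|$ for $s\in[0,\tau]$, so it suffices to show $x_k\to 0$.

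The second step is a monotone comparison. Fix $\epsilon>0$ small enough that $r\big((\tilde D+\epsilon I)e^{\tau A}\big)<1$, which is possible by continuity of the spectral radius. Each mean survival $\bar\delta_i$ is a continuous function of $y_i$ with value $\delta_i^{22}$ at $y_i=1$. In fact $\bar\delta_i\le\delta_i^{22}$ for all $y_i\in[0,1]$ by (\ref{eq:delta}), so entrywise $D_k\le\tilde D$ for \emph{every} $k$. Hence the $\epsilon$ is not even needed, and convergence of $y$ is only used to interpret the threshold.

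Since all matrices involved are nonnegative and $x_k\ge0$, induction gives
\[
0\le x_{k} \le (\tilde D e^{\tau A})^{k} x_0 .
\]
The right-hand side tends to $0$ because $r(\tilde De^{\tau A})<1$. Therefore $x_k\to0$, and then $x(t)\to0$ by the inter-pulse bound. The hypothesis $y_j(0)>0$ enters only to place us in the setting of Result \ref{result_4}. Even without it the comparison bound still holds, since $D_{eco}(y)\le\tilde D$ for all $y$.

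The main obstacle I expect is justifying the entrywise bound $\bar\delta_i\le\delta_i^{22}$ carefully. It follows from writing $\bar\delta_i$ as a convex combination of $\delta_i^{11},\delta_i^{12},\delta_i^{22}$ with weights $(1-y_i)^2,\,2y_i(1-y_i),\,y_i^2$. A second point needing care is that order preservation requires $e^{\tau A}\ge0$ and a diagonal $D_k\ge0$, so that multiplication preserves entrywise inequalities. If instead one insisted on using only $D_k\to\tilde D$, one would need a perturbation argument for products of matrices converging to a matrix of spectral radius below one. That can be done by choosing a norm in which $\|(\tilde D+\epsilon I)e^{\tau A}\|<1$, but the monotone route avoids this.
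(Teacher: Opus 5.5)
Your proposal is correct and follows essentially the paper's argument: bound the stroboscopic sequence entrywise by $z_{k+1}=\tilde D e^{\tau A} z_k$, using $D_{eco}(y)\le\tilde D$ and nonnegativity of $e^{\tau A}$, then pass to continuous time via boundedness of $e^{sA}$ on $[0,\tau]$. Your convex-combination justification of $\bar\delta_i\le\delta_i^{22}$ and your observation that neither Result \ref{result_4} nor the hypothesis $y_j(0)>0$ is actually needed are accurate refinements of points the paper leaves implicit.
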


\begin{proof}
Suppose that $r(\tilde{D}e^{\tau A}) < 1$ and $x(0) > 0$. Consider the discrete time model 
\begin{align}\label{eq:discrete_model}
x_{k+1} = D_{\text{eco}}(y(k\tau))e^{\tau A}x_k.
\end{align} 
 We claim that $\displaystyle\lim_{k \to \infty} x_k = 0$. To show this, 
note that for all $y(k\tau)$, $D_{\text{eco}}(y(k\tau)) \leq \tilde{D}$, where inequality is entry-wise. That is, $D_{\text{eco}}(y(k\tau)_{ii} \leq \tilde{D}_{ii}$ for all $i$ (since $D_{\text{eco}}(y(k\tau))_{ij} = \tilde{D}_{ij} = 0$ for all $i \neq j$). Also, as the off-diagonal entries of $\tilde{D}e^{\tau A}$ are non-negative, for any two positive vectors $x_1$ and $x_2$ such that $x_1 \leq x_2$ component wise, it follows from the theory of monotone dynamical systems (see \cite{Smith1995}) that $\tilde{D}e^{\tau A}x_1 \leq \tilde{D}e^{\tau A}x_2$. Thus, for any such $x_1$ and $x_2$ and any $y \leq 1$, 
\begin{align}\label{eq:discrete_inequality}
    D_{\text{eco}}(y)e^{\tau A} x_1 \leq \tilde{D}e^{\tau A} x_2 \leq  \tilde{D}e^{\tau A} x_2. 
\end{align}  
Set $\tilde{x} = x(0)$ and consider the system  $z_{k + 1} = \tilde{D}e^{\tau A}z_k$
with $z_0 = \tilde{x}$. By induction and inequality (\ref{eq:discrete_inequality}), it is clear that $0 \leq x_k \leq z_k$ for all $k \geq 1$. Furthermore, as $r(\tilde{D}e^{\tau A}) <1$, it follows from the results of \cite{patel2024spectral} that $\displaystyle \lim_{k \to \infty} z_k = 0$. Hence, $\displaystyle \lim_{k \to \infty}x_k = 0$. 

To show that $\displaystyle \lim_{t \to \infty} x(t) = 0$ for the continuous system, note that for $t \in (k\tau, (k + 1) \tau)$, $x(t) = x(k \tau)e^{A_{\text{eco}} (t - k \tau)}$ by the well-known solution to non-autonomous linear systems. As $\tau$ is a fixed value, $e^{A_{\text{eco}} (t - k \tau)}$ is bounded on the interval. Thus, as $k$ approaches infinity, it follows that $\displaystyle\lim_{t \to \infty} x(t) = 0$ as well. 
\end{proof}

\begin{theorem}  Conversely, if $r(\tilde{D}e^{\tau A}) > 1$, and at least one $s_j > 0$, then for any initial condition with $x(0)>0$ and $y_j(0)> 0$ for some $j$, we have that $\displaystyle\lim_{t\to \infty} x(t) = \infty$. 
  \end{theorem}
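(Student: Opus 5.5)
We will mirror the comparison argument of Result \ref{result_5}, but now bound the discrete iteration from below. We use the period map
\begin{align*}
x_{k+1} = D_{\text{eco}}(y(k\tau))e^{\tau A}x_k .
\end{align*}
By Result \ref{result_4}, since some $s_j>0$, $x(0)>0$ and $y(0)\neq 0$, we have $y(t)\to \mathbf{1}$. Since each mean survival $\bar\delta_i$ in (\ref{eq:mean_survival}) is a continuous function of $y_i$, it follows that $D_{\text{eco}}(y(k\tau))\to \tilde D = D_{\text{eco}}(1)$ entrywise.

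The spectral radius of a matrix depends continuously on its entries. Since $r(\tilde D e^{\tau A})>1$, we can therefore choose $\epsilon>0$ small so that the diagonal matrix $D_\epsilon := (1-\epsilon)\tilde D$ still satisfies $r(D_\epsilon e^{\tau A})>1$. (Shrinking each diagonal entry of $\tilde D$ by $\epsilon$ would work equally well, keeping entries positive by the standing assumption $\bar\delta_i>0$.) We then pick $K$ with $D_{\text{eco}}(y(k\tau))\geq D_\epsilon$ entrywise for all $k\geq K$.

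For $k\geq K$ we compare $x_k$ with the linear iteration
\begin{align*}
z_{k+1}=D_\epsilon e^{\tau A} z_k, \qquad z_K = x_K .
\end{align*}
The matrix $e^{\tau A}$ is nonnegative; indeed it is positive, since $A=A_{\text{eco}}$ is essentially nonnegative and irreducible. Multiplying by diagonal matrices ordered entrywise therefore preserves the componentwise order on nonnegative vectors. Inducting exactly as in Result \ref{result_5}, with the inequalities reversed, gives $x_k\geq z_k\geq 0$ for all $k\geq K$. Note that $x_K>0$, because the linear flow and the pulses with $\bar\delta_i>0$ preserve positivity.

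It remains to show that $z_k\to\infty$. The matrix $M:=D_\epsilon e^{\tau A}$ is positive, because $e^{\tau A}$ is positive and $D_\epsilon$ has positive diagonal. By Perron--Frobenius, $M$ therefore has a positive left eigenvector $w$ with eigenvalue $r(M)>1$. Then
\begin{align*}
w^T z_k = r(M)^{k-K} w^T x_K ,
\end{align*}
and the right-hand side diverges because $w^T x_K>0$. Since $w^T x_k\geq w^T z_k$, the norm of $x_k$ diverges. Alternatively, we can cite \cite{patel2024spectral} directly, as in Result \ref{result_5}.

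To pass to continuous time, we note that for $t\in(k\tau,(k+1)\tau]$ we have $x(t)=e^{A(t-k\tau)}x(k\tau^+)$. We use the lower bound $e^{As}\geq e^{-cs}I$ entrywise, where $c=\max_i|A_{ii}|$; this holds because $A+cI$ is nonnegative. This bound is uniform on $[0,\tau]$ and transfers divergence of $x_k$ (as measured by $w^T x_k$) to $x(t)$. The conclusion should be read as divergence of the total population, since individual components need not be monotone.

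The main obstacle is the eventual comparison step, for two reasons. First, Result \ref{result_4} gives only asymptotic convergence of $y$, so $\epsilon$ must be fixed before $K$, and the argument must restart the comparison at time $K\tau$ rather than at $0$. Second, the positivity of $e^{\tau A}$ is needed to obtain a strictly positive Perron eigenvector for the perturbed matrix; if irreducibility of $A_{\text{eco}}$ were not available, we would instead argue via any nonnegative eigenvector together with $x_K>0$.
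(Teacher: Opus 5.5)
Your proposal is correct and follows essentially the same route as the paper: continuity of the spectral radius near full resistance, Result~\ref{result_4} to bring $D_{\text{eco}}(y(k\tau))$ eventually close to $\tilde D$, comparison with a linear period map, and then transfer of divergence to continuous time. Your comparison against the fixed matrix $(1-\epsilon)\tilde D e^{\tau A}$, restarted at $x_K$, is in fact the rigorous form of the paper's comparison step; the paper instead compares against $D_{\text{eco}}(y(k^*\tau))$ (omitting $e^{\tau A}$) starting from $x(0)$, which is not justified as a lower bound because the individual $y_i(k\tau)$ need not be monotone in $k$.
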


\begin{proof}
Suppose that $r(\tilde{D}e^{\tau A}) > 1$ and $x(0) > 0$. Similar to the proof of Result \ref{result_5}, we consider the discrete time model $x_{k+1} = D_{\text{eco}}(y(k\tau))e^{\tau A}x_k$. We claim that the sequence $x_k$ diverges to infinity. By continuity of the spectral radius, there exists $\epsilon >0$ such that $r(D_{\text{eco}}(y)e^{\tau A}) > 1$ for all $y\in B_\epsilon (1)$ where 
$B_\epsilon = \{y : |y - 1| < \epsilon \}$. As $y(k\tau)$ is a sequence created by a subset of $y$, $\displaystyle\lim_{k \to \infty} y(k\tau) = 1$ by Result \ref{result_4}. Hence, there exists a $k^* > 0$ such that $y(k\tau) \in B_\epsilon(1)$ for all $k \geq k^*$. Set $\tilde{x} = x(0)$ and consider the system $z_{k+1} = D_{\text{eco}}(y(k^*\tau))z_k$ with $z_0 = \tilde{x}$. Then for all $k \geq k^*$, $x_k \geq z_k$. By the results of \cite{patel2024spectral}, $\displaystyle\lim_{k \to \infty} z_k = \infty$, and thus it must also hold that $\displaystyle\lim_{k \to \infty} x_k = \infty$, proving the claim. Following the same argument presented in Result \ref{result_5}, this shows that $\displaystyle\lim_{t \to \infty} x(t) = \infty$.
\end{proof}

The significance of these results is that in this model, the persistence of $x$ in this eco-evolutionary model depends only on the determining persistence of the fully resistant population (when $y=1$). 
\section{Case Study of the Spotted Winged Drosophila}\label{numerical}
We apply our model to explore the evolution of resistance of an agricultural pest species, the spotted-winged Drosophila (SWD) which motivated some of this work. SWD is an invasive fruit fly of Asian origin that has become a major invasive pest of soft-skinned fruit crops such as cherries, blueberries, raspberries, and strawberries \cite{GressZalom2019}. In contrast to many other Drosophila species, which are commonly associated with damaged or overripe fruit, female SWD can deposit their eggs in fruit that is still ripening by means of a serrated ovipositor. This makes the pest particularly destructive because infestation occurs before harvest, directly reducing marketability. SWD was first detected on the North American mainland in California in 2008 and spread rapidly through major fruit-producing regions thereafter \cite{Emiljanowicz2014}.

We parameterize our model using two sets of laboratory experiments.  The first is from Emiljanowicz et al. \cite{Emiljanowicz2014} in which they measured (a) the proportion of individuals in each stage that survived to the next stage and (b) of those that transitioned, the times spent in various life stages (in Table $4$ and $5$ from Emiljanowicz).  We used these data to parameterize intrinsic mortality rates, stage transition rates, and fecundity.  Our parameterizations from this data are summarized in Table \ref{tab:continuous_params}.

\begin{table}[H]
\centering
\caption{Biological parameter values used for \textit{D. suzukii}.}
\label{tab:continuous_params}
\begin{tabular}{llll}
\hline
Parameter & Formula & Value & Description \\
\hline
$\alpha_{12}$ & $1/1.4$ & $0.714$ & egg $\to$ larva transition rate \\
$\alpha_{23}$ & $1/6.0$ & $0.167$ & larva $\to$ pupa transition rate \\
$\alpha_{34}$ & $1/5.8$ & $0.172$ & pupa $\to$ adult transition rate \\
$\lambda$ & $491.1/160$ & $3.069$ & adult fecundity parameter \\
$\mu_1$ & $0.868/1.384$ & $0.627$ & egg mortality rate \\
$\mu_2$ & $0.900/3.115$ & $0.289$ & larval mortality rate \\
$\mu_3$ & $0.944/5.836$ & $0.162$ & pupal mortality rate \\
$\mu_4$ & $0.875/10$ & $0.088$ & adult mortality rate \\
\hline
\end{tabular}
\end{table}
The second set of experiments, from Mermer et al. \cite{Mermer2021}, quantified the mortality rate of different classes of chemical compounds (insecticides) on the distinct life stages of SWD. These mortality rates were measured in populations putatively with no resistance, and can thus be understood to be $\delta_i^{11}$ for each stage. Our parameterizations from this data are found in Table \ref{tab:insecticide_survival}. 

\begin{table}[H]
\centering
\caption{Stage-specific pulse survival parameters derived from Mermer et al. \cite{Mermer2021}. Each entry is computed as $\delta_i=1-m_i$, where $m_i$ is the reported mortality proportion in stage $i$.}
\label{tab:insecticide_survival}
\begin{tabular}{lcccc}
\hline
Insecticide & $\delta_1$ (egg) & $\delta_2$ (larva) & $\delta_3$ (pupa) & $\delta_4$ (adult) \\
\hline
Malathion & $0.149$ & $0.276$ & $0.020$ & $0.100$ \\
Zeta-cypermethrin & $0.478$ & $0.405$ & $0.008$ & $0.003$ \\
Spinosad & $0.172$ & $0.241$ & $0.003$ & $0.052$ \\
Spinetoram & $0.156$ & $0.189$ & $0.008$ & $0.050$ \\
Methomyl & $0.157$ & $0.230$ & $0.000$ & $0.010$ \\
Cyantraniliprole & $0.272$ & $0.308$ & $0.050$ & $0.050$ \\
Fenpropathrin & $0.370$ & $0.335$ & $0.050$ & $0.050$ \\
Phosmet & $0.089$ & $0.085$ & $0.005$ & $0.010$ \\
Cyclaniliprole & $0.327$ & $0.282$ & $0.300$ & $0.300$ \\
\hline
\end{tabular}
\end{table}
Given the population structure of SWD, we utilize the four-dimensional version of system (\ref{eq:pulse}) to test which insecticides pose the greatest risk to evolution based on the stage in which selection occurs. Note that as very little is known about the mechanisms of resistance, we must make assumptions on the evolutionary parameters $s$ and $h$, which measures the stage-dependent selection and heterozygosity, respectively. Throughout, we set $h = 0.5$. We also make a small adjustment to the pulse survival rate for the insectide Methomyl in its effect on pupae, as $\delta_3 = 0.000$. This means that no pupae survive the pulse, and in order for the computer model to successfully illustrate results and to satisfy the assumption that ($\bar{\delta}_i>0$), we change this value to $\delta_3 = 0.0001$. 

To explore how selection at each life stage impacts the rate of resistance development for each insecticide, we consider four examples. All growth and transition parameters are fixed as in Table \ref{tab:continuous_params}, and pulse survival parameters are as in Table \ref{tab:insecticide_survival} for each numerical exploration. In each, initial population density was set to $x_i(0) = 80$ for all four stages, and all initial frequency resistances were fixed as $y_i(0) = 0.1$. Time between pulse $\tau$ was set to 20 days, and as stated above, heterozygosity coefficient $h$ was fixed at $0.5$. The selection coefficient combinations, $\{s_i\}$ are chosen for the four examples in the following way:
\begin{enumerate}
    \item[(a)] Selection occurs in eggs - $s_1 = 0.9$, $s_2, s_3, s_4 = 0$
    \item[(b)] Selection occurs in larvae - $s_2 = 0.9$, $s_1, s_3, s_4 = 0$
    \item[(c)] Selection occurs in pupae - $s_3 = 0.9$, $s_1, s_2, s_4 = 0$
    \item[(d)] Selection occurs in adult - $s_4 = 0.9, s_1, s_2, s_3 = 0$
\end{enumerate}
Results of the numerical illustrations are shown below in Figure \ref{fig:resistance_by_selection}. In each, the resistance among the adult population is displayed. The trends seen for the time to build resistance of each insecticide are the same for the life stages of eggs, larvae and pupae, in both ordering and length. They only differ by amplitude of the pulse - with life stages closure to the selected stage enjoying a sharper jump rather than a slow increase (see the difference between the pulse in the selection in adults vs. the selection in eggs). Because we are interested in what insecticide will build resistance the slowest and at what overall speed resistance will build, it will suffice to only consider the adult population. 

\begin{figure}[h]
    \centering
    \includegraphics[width=1\linewidth]{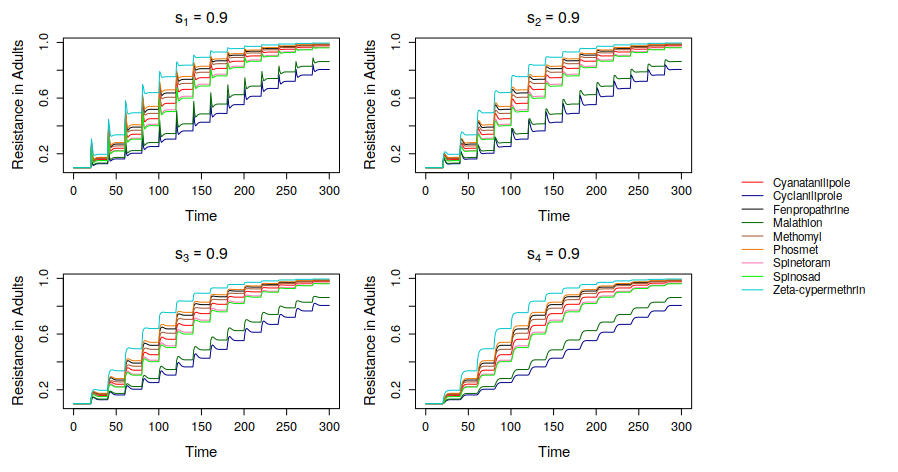}
    \caption{Plots of resistance proportion over time in adult populations varied by which stage resistance is selected for. From left to right, they are (a) selection in eggs, (b) selection in larvae, (c) selection in pupae and (d) selection in adults. Each of the nine insecticides listed in Table \ref{tab:insecticide_survival} is implemented in a separate curve.}
    \label{fig:resistance_by_selection}
\end{figure}

We note some key takeaways from the results illustrated in Figure \ref{fig:resistance_by_selection}. First, we note that our plots confirm our assertion in Result \ref{result_5}. In every example, each curve is periodically increasing toward 1. This includes the curve for Methomyl in the bottom left subfigure, where selection occurs in the pupa stage. Because Methomyl has such a low survival rate in pupae, resistance builds very slowly, as even SWD pupae with two resistant alleles only have a slightly higher than 0.01\% survival rate. However, as shown in Figure \ref{fig:methomyl_zoom}, when we plot only the Methomyl curve, we can indeed see increase in resistance frequency. 

\begin{figure}[h]
    \centering
    \includegraphics[width=0.5\linewidth]{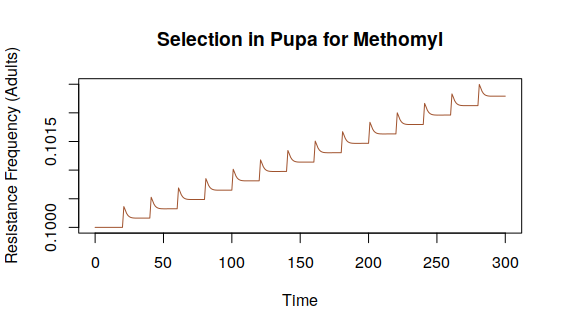}
    \caption{Resistance proportion over time in adult populations when selection occurs in pupa population and treated with Methomyl. This curve is an isolated version of that presented in Figure \ref{fig:resistance_by_selection} (c).}
    \label{fig:methomyl_zoom}
\end{figure}

We also notice that overall, resistance grows the most quickly when selection occurs in larvae. At the 300 day mark, even the slowest resistant building insecticide, Cyclaniliprole, had begun to plateau toward the 100\% frequency level. To understand why that may be the case, we note that across all insecticides, survival rate for larvae, $\delta_2$, is significantly higher than survival rates of pupae and adults. Also, although $\delta_2$ is similar in value to the survival rate in egg populations, $\delta_1$, the transition rate from egg to larva is notably faster than the transition rate from larva to pupa. Together, these observations suggest that selection in a stage with slower transition rate and higher insecticide survival leads to faster resistance building. 

Finally, we compare the effect of two insecticides of note. We see that Cyclaniliprole and Zeta-Cypermethrin have nearly opposite impacts on resistance building. In egg and larva selection, treatment with Zeta-cypermethrin leads to among the quickest rates of increased resistance. However, in pupa and adult selection, resistance builds slowly when Zeta-cypermethrin is implemented.  We notice that Zeta-cypermethrin has high survival rates in egg and larva populations, but low survival rates in pupa and adult. 

In contrast, when Cyclaniliprole is used to treat SWD, we see that resistance builds slowly when selection occurs in eggs and larvae, but quickly when it occurs in pupae and adults. Cyclaniliprole has the highest survival rates in pupae and adults, but unlike Zeta-cypermethrin, these rates are similar to those for eggs and larvae. Together, this suggests that treatment dynamics are more complex than just inspecting survival rates at the selection stage. Instead, ecological parameters seem to play an important role, and numerical exploration would be useful for determining the optimal insecticide for a given circumstance.

\section{Discussion}
In this work, we proposed an $n$-dimensional model of a stage-structured population. Along with continuous fecundity, mortality, and developmental transition, the model considers the impact of pulsed disturbances on the population from both an ecological (population density) and evolutionary (resistance frequency) standpoint. We leveraged the analysis of several sub-models which isolated specific aspects of the larger system to prove theoretical results. Namely, we established that the frequency of the resistant allele in the population will always approach 1. We also obtained conditions that guaranteed the extinction or persistence of the population, which evolutionarily only depended on the dynamics at full-resistance. Finally, we applied the model to insecticide treatment of spotted-winged Drosophila (SWD) using experimental data. Through numerical illustrations, we were able to explore the impact of a variety of insecticides given different evolutionary selection dynamics to observe the rate at which resistance built. \\


Our framework and results bring together several strands of ecological and evolutionary theory that have largely developed separately. Our results are in line with some classical population-genetic theory of the fixation of alleles under selection \cite{haldane1927selection, kimura1962probability}. More recently, Li et al \cite{li2016lifehistory} studied the fixation with structure in the population, highlighting how selection on reproduction versus survival can have different fixation rates.  In our context, we only examined survival difference amongst genotypes.  
From an ecological perspective, previous work has examined persistence of structured populations, in a community context \cite{HofbauerSchreiber2010Robust}. However, these results are not directly applicable in our context.  Their results assume an irreducibility condition that is not met when one considers explicit genetic structure. This is because, in the absence of mutation, if the population is of entirely one genotype, then it will never produce the other genotypes. In this work, persistence is simplified since we only consider one population and not the broader eco-evolutionary dynamics of a community of interacting structured populations.  
Our model connects these perspectives in a setting in which the demographic and selective effects of disturbance occur at discrete times. In the absence of the pulse disturbance, the ecological dynamics are decoupled from genotype frequencies; at each disturbance, however, stage- and genotype-specific mortality simultaneously alters population abundance and imposes selection. Consequently, the ecological and evolutionary components become coupled through the repeated disturbances, with their effects propagated by the continuous dynamics between pulses. This formulation is particularly relevant to systems in which management interventions simultaneously reduce abundance and select for resistance.

The numerical exploration presented in Section \ref{numerical} leveraged prior empirical work to parameterize our model for both ecological life cycle traits and the efficacy of different insecticides. If the model were to be applied to a different application than SWD, it would be useful for prior experiments to be done as in \cite{Emiljanowicz2014} and \cite{Mermer2021} to ensure realistic results. Furthermore, with respect to SWD, we did not have experimental data to inform our choices for heterozygosity or selection coefficients. To address this, we estimated $h = 0.5$ as a middle value, and varied where selection occurred to compare dynamics. However, to more accurately predict the growth of resistance in an SWD population, more precise values would need to be utilized. For example, since the survival proportions are known for each life-stage in a wholly resistant population, experimenters may be able to measure the increase in survival rate in a population across pulses through the LC50, or the concentration of the insecticide which will be lethal to 50\% of the population \cite{Rozman2010}. Alternatively, if genomic information is available, researchers could use techniques similar to those in \cite{Ruwende1995} to estimate the rate of insecticide survival in both homogeneous and heterogeneous members of the population at each life stage. This work would provide valuable empirical results to both heterozygosity and selection coefficients that were chosen via assumption in our examples and allow for significant improvements to predictions of rates of resistance growth over time.

In future work, fitness trade-offs associated with resistance and density-dependent population dynamics could be incorporated. In the present framework, resistant individuals may have a survival advantage during treatment without experiencing a corresponding fitness cost between pulses. However, resistance may be associated with reduced fecundity, increased mortality, or altered developmental rates. Incorporating such costs would introduce competition, with treatment favoring resistance while the dynamics between treatments may favor the susceptible population. Consequently, the frequency of the resistant allele may not necessarily approach one, and the long-term evolutionary outcome may instead depend on the frequency and strength of treatment and the magnitude of the fitness cost. Similarly, incorporating density dependence would allow population recovery between disturbances to depend on population abundance and could also reveal how ecological dynamics interacts with resistance evolution.

The evolutionary component of the model could also be extended to include mutation and a more realistic genetic basis for resistance. Our current model assumes that resistant and susceptible alleles are already present and follows their deterministic frequency dynamics. A stochastic treatment of mutation would allow resistance to arise when resistant alleles are initially absent or rare. In addition, we could have more than one locus as more empirical information becomes available. Recent work on \textit{Drosophila suzukii} has identified multiple molecular mechanisms associated with pyrethroid and spinosad resistance, including changes in genes related to detoxification and cuticular processes, and has suggested a potential role for alternative splicing \cite{Tabuloc2024}. Incorporating such mechanisms could provide a closer connection between the mathematical representation of resistance and its underlying biology.



\subsection*{Acknowledgments} SP and AA were supported by the Oregon State University College of Science Research and Innovation Seed (SciRIS) Stage 2 grant. ADL was supported by the Muhlenberg College Faculty Summer Research Grant. 

\newpage
\appendix
\section{Supplement: Details of Model Derivation}
To build on the work of Patel et al. \cite{Patel2025}, we generalize the two-stage eco-evolutionary model where pulsed treatments and selection only occurred in the adult stage. We generalize to $n$ stages, in which each stage is impacted, possibly differently, by the pulsed disturbance. More specifically, between disturbance events, the population evolves continuously according to ecological dynamics, while at discrete times it experiences a pulse disturbance through stage- and genotype-dependent survival.

\subsection{Dynamics between pulses}
We model a stage and genotype-structured population.  We assume there are $n$ total stages and, at the genetic level, we consider a single locus with two alleles. Thus, individuals in each stage belong to one of three genotypes: $11,$ $12$ and $22.$ Let $x_k^{ij}$ be the population density of individuals in stage $k$ of genotype $ij$.  Our formulation follows the general framework of stage-structured matrix population models commonly used in ecology and demography \cite{Caswell2001}.
Individuals in stage \(i\) die at a per capita death rate \(\mu_i\), and for \(i=1,\dots,n-1\), they progress from stage \(i\) to stage \(i+1\) at a per capita growth rate \(\alpha_{i,i+1}\). We assume these rates are independent of genotype. 
Adults are assumed to be in stage \(n\), and mating between two adults contributes new individuals of genotype $ij$ to stage \(1\) at a rate \(f_{ij}(x_n^{11}, x_n^{12}, x_n^{22})\), that is a function of the adult population density of the three genotypes. 

Under these assumptions, the stage and genotype-structured dynamics are given by
\begin{equation}
\begin{aligned}
\frac{dx_1^{ij}}{dt} &= f_{ij} (x_n^{11}, x_n^{12}, x_n^{22}) - (\mu_1+\alpha_{12})x_1^{ij},\\
\frac{dx_2^{ij}}{dt} &= \alpha_{12}x_1^{ij} - (\mu_2+\alpha_{23})x_2^{ij},\\
&\ \vdots\\
\frac{dx_{n-1}^{ij}}{dt} &= \alpha_{n-2,n-1}x_{n-2}^{ij}-(\mu_{n-1}+\alpha_{n-1,n})x_{n-1}^{ij},\\
\frac{dx_n^{ij}}{dt} &= \alpha_{n-1,n}x_{n-1}^{ij}-\mu_n x_n^{ij}.
\end{aligned}
\label{eq:eco_total}
\end{equation}
System \eqref{eq:eco_total} describes the transitions of individuals across developmental stages and mortality. 

Next, we define the fecundity functions $f_{ij}(x_n^{11}, x_n^{12}, x_n^{22})$.  We assume that each individual has a fecundity rate $\lambda$ that is genotype-independent and that there is random mating, as in Mendelian inheritance. Under this assumption, we arrive at

\begin{subequations}\label{fecundity}
\begin{align}
         f_{11} = &\lambda \big [ x_n^{11}(\frac{2x_n^{11} + x_n^{12}}{2x_n}) + \frac{1}{2}x_n^{12}(\frac{2x_n^{11} + x_n^{12}}{2x_n}) \big] \\
    f_{12} = &\lambda \big [ x_n^{11}(\frac{2x_n^{22} + x_n^{12}}{2x_n}) + x_n^{12}(\frac{x_n^{11} + x_n^{12} + x_n^{22}}{2x_n}) + x_n^{22}(\frac{2x_n^{11} + x_n^{12}}{2x_n}) \big ]\\
    f_{22} = & \lambda \big [ x_n^{22}(\frac{2x_n^{22} + x_n^{12}}{2x_n}) + \frac{1}{2}x_n^{12}(\frac{2x_n^{22} + x_n^{12}}{2x_n}) \big]
\end{align}
       \end{subequations}

Here, $x_n(t) = x_n^{11}(t) + x_n^{12}(t) + x_n^{22}(t)$ is the total adult population. Similarly, we let $x_k(t)=x_k^{11}(t) + x_k^{12}(t) + x_k^{22}(t)$ for $k = 1, 2...n-1$ be the total population density in stage $k$ (of all three genotypes). 

In our eco-evolutionary model, we track $x(t)=x_1(t), x_2(t), ...x_n(t)$ as the ecological variable of total population density in each stage. For the evolutionary variable, we define $y_k(t)$ as the frequency of allele 2 at time $t$ in stage $k$, which is 

\begin{equation}\label{yeqn}
    y_k=\frac{2x_k^{22} + x_k^{12}}{2x_k}
\end{equation}

We use quotient rule to derive $\frac{dy_k}{dt}$. With making some substitutions, we can express the ecological and evolutionary dynamics between the pulses as 

\begin{align}
\frac{dx}{dt}=&A_{\mathrm{eco}}x\\
\frac{dy}{dt}=&A_{\mathrm{evo}}(x)y
\end{align}
where
\begin{equation}
A_{\mathrm{eco}}=
\begin{pmatrix}
-(\mu_1+\alpha_{12}) & 0 & 0 & \cdots & 0 & \lambda\\
\alpha_{12} & -(\mu_2+\alpha_{23}) & 0 & \cdots & 0 & 0\\
0 & \alpha_{23} & -(\mu_3+\alpha_{34}) & \cdots & 0 & 0\\
\vdots & \vdots & \vdots & \ddots & \vdots & \vdots\\
0 & 0 & 0 & \cdots & -(\mu_{n-1}+\alpha_{n-1,n}) & 0\\
0 & 0 & 0 & \cdots & \alpha_{n-1,n} & -\mu_n
\end{pmatrix}.
\label{eq:eco_matrix_final}
\end{equation}
 and
\begin{equation}
A_{\mathrm{evo}}(x)=
\begin{pmatrix}
-\lambda \dfrac{x_n}{x_1} & 0 & 0 & \cdots & 0 & \lambda \dfrac{x_n}{x_1}\\[0.8em]
\alpha_{12}\dfrac{x_1}{x_2} & -\alpha_{12}\dfrac{x_1}{x_2} & 0 & \cdots & 0 & 0\\[0.8em]
0 & \alpha_{23}\dfrac{x_2}{x_3} & -\alpha_{23}\dfrac{x_2}{x_3} & \cdots & 0 & 0\\
\vdots & \vdots & \vdots & \ddots & \vdots & \vdots\\
0 & 0 & 0 & \cdots & \alpha_{n-1,n}\dfrac{x_{n-1}}{x_n} & -\alpha_{n-1,n}\dfrac{x_{n-1}}{x_n}
\end{pmatrix}.
\label{eq:evo_matrix_explicit}
\end{equation}

\subsection{Pulse Disturbance}\noindent
We now incorporate periodic disturbance into the model. We assume that the population experiences a pulse every \(\tau\) time units. At each pulse, individuals survive with probabilities that depend on both stage and genotype and the effect of the pulse occurs instantaneously in time.  This is an approximation for when the disturbance impacts are relatively quick compared to the continuous dynamics.

Let \(\delta_k^{ij}\) denote the survival probability of an individual in stage \(k\) with genotype \(ij\in\{11,12,22\}\) due to the pulse disturbance. For each stage \(k\), we assume that
\begin{equation}
\delta_k^{11}\leq \delta_k^{12}\leq \delta_k^{22},
\label{eq:delta_order}
\end{equation}
so that individuals carrying allele \(2\) have greater or equal survival during the disturbance.  That is, 1 is the resistance allele.
\\\\
Then the post-pulse genotype densities are given by
\begin{equation}
x_k^{ij}(\ell\tau^+)=\delta_k^{ij}\,x_k^{ij}(\ell\tau^-),
\qquad pq\in\{11,12,22\}.
\label{eq:geno_pulse_update}
\end{equation}
where $x_k^{ij}(\ell\tau^-)$ is a left-hand limit and $x_k^{ij}(\ell\tau^+)$ is a right hand limit, as in classical impulse differential equations.

Summing over the three genotypes, the total population in stage \(k\) after the pulse is
\begin{equation}
x_k(\ell\tau^+)
=
\delta_k^{22}x_k^{22}(\ell\tau^-)
+
\delta_k^{12}x_k^{12}(\ell\tau^-)
+
\delta_k^{22}x_k^{22}(\ell\tau^-).
\label{eq:xi_pulse_before_HW}
\end{equation}
To express the pulse update in terms of total population density and allele frequency, we assume that genotype frequencies within each stage can be approximated by Hardy--Weinberg proportions. Thus, if \(y_k\) denotes the frequency of allele \(A\) in stage \(k\), then the corresponding frequencies for genotype 22, 12, and 11 are
\[
y_k^2,\qquad 2y_k(1-y_k),\qquad (1-y_k)^2
\]
respectively. Hence,
\begin{equation}
x_k^{22}=y_k^2x_k,\qquad
x_k^{12}=2y_k(1-y_k)x_k,\qquad
x_k^{11}=(1-y_k)^2x_k.
\label{eq:HW_stage_i}
\end{equation}
Substituting \eqref{eq:HW_stage_i} into \eqref{eq:xi_pulse_before_HW} yields
\begin{equation}
x_k(k\tau^+)
=
x_k(k\tau^-)
\left[
y_i^2\delta_i^{22}
+
2y_i(1-y_i)\delta_i^{12}
+
(1-y_i)^2\delta_i^{11}
\right].
\label{eq:xi_pulse_expand}
\end{equation}
We therefore define the mean stage-specific survival by
\begin{equation}
\bar{\delta}_i
=
y_i^2\delta_i^{22}
+
2y_i(1-y_i)\delta_i^{12}
+
(1-y_i)^2\delta_i^{11},
\label{eq:delta_bar}
\end{equation}
so that the post-pulse ecological update becomes
\begin{equation}
x_k(\ell\tau^+)=\bar{\delta}_k\,x_k(\ell\tau^-).
\label{eq:xi_pulse_final}
\end{equation}
where we note that $\bar{\delta}_k$ depends on $y$.  Thus, the pulse acts on the ecological component by scaling the stage density according to the genotype-weighted average survival in that stage.

We now derive the effect of the pulse on the stage-specific allele frequencies. We recall that
\[
y_k(\ell\tau^+)
=
\frac{2x_k^{22}(\ell\tau^+)+x_k^{12}(\ell\tau^+)}{2x_k(\ell\tau^+)}.
\]
Using the genotype-level pulse update \eqref{eq:geno_pulse_update}, we obtain
\begin{equation}
y_k(\ell\tau^+)
=
\frac{2\delta_k^{22}x_k^{22}(\ell\tau^-)+\delta_k^{12}x_k^{12}(\ell\tau^-)}{2x_k(\ell\tau^+)}.
\label{eq:yi_pulse_start}
\end{equation}
Substituting the Hardy--Weinberg expressions from \eqref{eq:HW_stage_i} into \eqref{eq:yi_pulse_start} gives
\begin{equation}
y_i(k\tau^+)
=
\frac{2\delta_k^{22}y_k^2x_k+2\delta_k^{12}y_k(1-y_k)x_k}{2x_k(\ell\tau^+)},
\label{eq:yi_pulse_HW_sub}
\end{equation}
where, for simplicity, \(x_k\) and \(y_k\) are evaluated at time \(\ell\tau^-\).
\\
Using the post-pulse ecological update \eqref{eq:xi_pulse_final}, we have
\[
x_k(k\tau^+)=\bar{\delta}_k x_k(k\tau^-)=\bar{\delta}_k x_k.
\]
Hence \eqref{eq:yi_pulse_HW_sub} becomes
\begin{equation}
y_k(\ell\tau^+)
=
\frac{2\delta_k^{22}y_k^2x_k+2\delta_k^{12}y_k(1-y_k)x_k}{2\bar{\delta}_k x_k}.
\label{eq:yi_pulse_simplify1}
\end{equation}
Factoring out \(y_k\), we obtain
\begin{equation}
y_k(k\tau^+)
=
y_k
\frac{\delta_k^{22}y_k+\delta_k^{12}(1-y_k)}{\bar{\delta}_k}.
\label{eq:yi_pulse_simplify2}
\end{equation}
We now define the marginal survival associated with allele \(1\) by
\begin{equation}
\bar{\delta}_k^{\,m}
=
\delta_k^{22}y_k+\delta_k^{12}(1-y_k).
\label{eq:delta_bar_m}
\end{equation}
Therefore, the post-pulse evolutionary update can be written as
\begin{equation}
y_k(\ell\tau^+)
=
y_k(\ell\tau^-)\frac{\bar{\delta}_k^{\,m}}{\bar{\delta}_k}.
\label{eq:yi_pulse_final}
\end{equation}
Equation \eqref{eq:yi_pulse_final} shows that the pulse changes the allele frequency according to the ratio between the marginal survival of allele 2) and the mean survival in the stage. In particular, if individuals carrying allele 2 survive the pulse better on average than the stage population as a whole, then the frequency of allele 2 increases after the disturbance.

Altogether, at pulse times \(t=\ell\tau\), the ecological component is updated according to 
\begin{align}
x(\ell\tau^+)&=D_{\mathrm{eco}}(y)\,x(\ell\tau^-)\\
y(\ell\tau^+)&=D_{\mathrm{evo}}(y)\,y(\ell\tau^-),
\label{pulseeqns}
\end{align}
where
\begin{equation}
D_{\mathrm{eco}}(y)=
\operatorname{diag}\bigl(\bar{\delta}_1,\bar{\delta}_2,\dots,\bar{\delta}_n\bigr).
\label{eq:eco_pulse_diag}
\end{equation}
and
\begin{equation}
D_{\mathrm{evo}}(y)=
\operatorname{diag}\left(
\frac{\bar{\delta}_1^{\,m}}{\bar{\delta}_1},
\frac{\bar{\delta}_2^{\,m}}{\bar{\delta}_2},
\dots,
\frac{\bar{\delta}_n^{\,m}}{\bar{\delta}_n}
\right).
\label{eq:evo_pulse_diag}
\end{equation}
Therefore, the full coupled system takes the form
\begin{equation}
\begin{cases}
\dfrac{dx}{dt}=A_{\mathrm{eco}}x,\\[0.5em]
\dfrac{dy}{dt}=A_{\mathrm{evo}}(x)\,y,
\end{cases}
\qquad t\neq \ell\tau,
\label{eq:continuous_coupled}
\end{equation}
\begin{equation}
\begin{cases}
x(\ell\tau^+)=D_{\mathrm{eco}}(y)\,x(\ell\tau^-),\\[0.5em]
y(\ell\tau^+)=D_{\mathrm{evo}}(y)\,y(\ell\tau^-).
\end{cases}
\label{eq:discrete_coupled}
\end{equation}
Systems \eqref{eq:continuous_coupled} and \eqref{eq:discrete_coupled} together define the baseline eco-evolutionary model with periodic disturbance.

\newpage

\bibliographystyle{abbrv}
\bibliography{Reference}

@article{borgsteede1989lack,
  title={Lack of reversion of a benzimidazole resistant strain of {Haemonchus contortus} after six years of levamisole usage},
  author={Borgsteede, F. H. M. and Duyn, S. P. J.},
  journal={Research in Veterinary Science},
  volume={47},
  number={2},
  pages={270--272},
  year={1989},
  doi={10.1016/S0034-5288(18)31218-9}
}

@article{dilks2021newly,
  title={Newly identified parasitic nematode beta-tubulin alleles confer resistance to benzimidazoles},
  author={Dilks, Clayton M. and Koury, Emily J. and Buchanan, Claire M. and Andersen, Erik C.},
  journal={International Journal for Parasitology: Drugs and Drug Resistance},
  volume={17},
  pages={168--175},
  year={2021},
  doi={10.1016/j.ijpddr.2021.09.006}
}

@article{tang_periodic_lv_2002,
  author   = {TANG, SANYI and CHEN, LANSUN},
  title    = {THE PERIODIC PREDATOR-PREY LOTKA–VOLTERRA MODEL WITH IMPULSIVE EFFECT},
  journal  = {Journal of Mechanics in Medicine and Biology},
  volume   = {02},
  number   = {03n04},
  pages    = {267-296},
  year     = {2002},
  doi      = {10.1142/S021951940200040X},
  url      = {https://doi.org/10.1142/S021951940200040X},
  eprint   = { https://doi.org/10.1142/S021951940200040X}
}

@article{foo2014evolution,
  author  = {Foo, Jasmine and Michor, Franziska},
  title   = {Evolution of acquired resistance to anti-cancer therapy},
  journal = {Journal of Theoretical Biology},
  volume  = {355},
  pages   = {10--20},
  year    = {2014},
  doi     = {10.1016/j.jtbi.2014.02.025}
}

@book{WHO2006preventive,
  author    = {{World Health Organization}},
  title     = {Preventive Chemotherapy in Human Helminthiasis: Coordinated Use of Anthelminthic Drugs in Control Interventions: A Manual for Health Professionals and Programme Managers},
  publisher = {World Health Organization},
  address   = {Geneva},
  year      = {2006},
  isbn      = {9789241547109}
}

@article{patel2025anthelmintic,
  author  = {Patel, Swati and Lyberger, Kelsey and Vegvari, Carolin and Gulbudak, Hayriye},
  title   = {Eco-evolutionary dynamics of anthelmintic resistance in soil-transmitted helminths},
  journal = {Theoretical Population Biology},
  volume  = {163},
  pages   = {80--90},
  year    = {2025},
  doi     = {10.1016/j.tpb.2025.03.006}
}

@article{lakmeche_chemo_periodic_2000,
  author    = {Lakmeche, A and Arino, O},
  address   = {WATERLOO},
  issn      = {1201-3390},
  journal   = {Dynamics of continuous, discrete and impulsive systems},
  language  = {eng},
  number    = {2},
  pages     = {265-287},
  publisher = {Watam Press},
  title     = {Bifurcation of non trivial periodic solutions of impulsive differential equations arising chemotherapeutic treatment},
  volume    = {7},
  year      = {2000}
}

@article{patel_spectral_2024,
  title    = {On the spectral radius properties of a key matrix in periodic impulse control},
  volume   = {187},
  issn     = {0167-6911},
  url      = {https://www.sciencedirect.com/science/article/pii/S0167691124000690},
  doi      = {10.1016/j.sysconle.2024.105781},
  urldate  = {2025-01-18},
  journal  = {Systems \& Control Letters},
  author   = {Patel, Swati and De Leenheer, Patrick},
  month    = may,
  year     = {2024},
  vol      = {187},
  issn     = {105781}
}

@article{mermer2021timing,
 title={Timing and order of different insecticide classes drive control of Drosophila suzukii; a modeling approach},
 author={Mermer, Serhan and Pfab, Ferdinand and Tait, Gabriella and Isaacs, Rufus and Fanning, Philip D and Van Timmeren, Steven and Loeb, Gregory M and Hesler, Stephen P and Sial, Ashfaq A and Hunter, Jamal H and others},
 journal={Journal of Pest Science},
 volume={94},
 number={3},
 pages={743--755},
 year={2021},
 publisher={Springer}
}

@article{mailleret_pulsed_biocontrol_2009,
  title    = {Global stability and optimisation of a general impulsive biological control model},
  journal  = {Mathematical Biosciences},
  volume   = {221},
  number   = {2},
  pages    = {91-100},
  year     = {2009},
  issn     = {0025-5564},
  doi      = {https://doi.org/10.1016/j.mbs.2009.07.002},
  url      = {https://www.sciencedirect.com/science/article/pii/S0025556409001151},
  author   = {Ludovic Mailleret and Frédéric Grognard}
}

@article{ren2017tumour,
  author  = {Ren, H.-P. and Yang, Y. and Baptista, M. S. and Grebogi, C.},
  title   = {Tumour chemotherapy strategy based on impulse control theory},
  journal = {Philosophical Transactions of the Royal Society A: Mathematical, Physical and Engineering Sciences},
  volume  = {375},
  number  = {2088},
  pages   = {20160221},
  year    = {2017},
  doi     = {10.1098/rsta.2016.0221}
}

@article{gao2017mass,
  author  = {Gao, Daozhou and Lietman, Thomas M. and Dong, Cheng-Ping and Porco, Travis C.},
  title   = {Mass drug administration: the importance of synchrony},
  journal = {Mathematical Medicine and Biology},
  volume  = {34},
  number  = {2},
  pages   = {241--260},
  year    = {2017},
  doi     = {10.1093/imammb/dqw005}
}

@article{li2016lifehistory,
  author  = {Li, Xiang-Yi and Kurokawa, Shun and Giaimo, Stefano and Traulsen, Arne},
  title   = {How Life History Can Sway the Fixation Probability of Mutants},
  journal = {Genetics},
  volume  = {203},
  number  = {3},
  pages   = {1297--1313},
  year    = {2016},
  doi     = {10.1534/genetics.116.188409}
}

@article{haldane1927selection,
  author  = {Haldane, J. B. S.},
  title   = {A Mathematical Theory of Natural and Artificial Selection, Part V: Selection and Mutation},
  journal = {Proceedings of the Cambridge Philosophical Society},
  volume  = {23},
  number  = {7},
  pages   = {838--844},
  year    = {1927},
  doi     = {10.1017/S0305004100015644}
}

@article{kimura1962probability,
  title={On the probability of fixation of mutant genes in a population},
  author={Kimura, Motoo},
  journal={Genetics},
  volume={47},
  number={6},
  pages={713--719},
  year={1962},
  doi={10.1093/genetics/47.6.713}
}

@article{HofbauerSchreiber2010Robust,
  author  = {Hofbauer, Josef and Schreiber, Sebastian J.},
  title   = {Robust Permanence for Interacting Structured Populations},
  journal = {Journal of Differential Equations},
  year    = {2010},
  volume  = {248},
  number  = {8},
  pages   = {1955--1971},
  doi     = {10.1016/j.jde.2009.11.010}
}

@article{pastore2021evolution,
  author  = {Pastore, Abigail I. and Barab{\'a}s, Gy{\"o}rgy and
             Bimler, Malyon D. and Mayfield, Margaret M. and
             Miller, Thomas E.},
  title   = {The Evolution of Niche Overlap and Competitive Differences},
  journal = {Nature Ecology \& Evolution},
  year    = {2021},
  volume  = {5},
  number  = {3},
  pages   = {330--337},
  doi     = {10.1038/s41559-020-01383-y}
}

@article{cortez2016magnitude,
  author  = {Cortez, Michael H.},
  title   = {How the Magnitude of Prey Genetic Variation Alters
             Predator--Prey Eco-Evolutionary Dynamics},
  journal = {The American Naturalist},
  year    = {2016},
  volume  = {188},
  number  = {3},
  pages   = {329--341},
  doi     = {10.1086/687393}
}

@article{patel2018partitioning,
  author  = {Patel, Swati and Cortez, Michael H. and Schreiber, Sebastian J.},
  title   = {Partitioning the Effects of Eco-Evolutionary Feedbacks on Community Stability},
  journal = {The American Naturalist},
  year    = {2018},
  volume  = {191},
  number  = {3},
  pages   = {381--394},
  doi     = {10.1086/695834}
}

@article{patel2026persistence,
  author  = {Patel, Swati and Govaert, Lynn and Lyberger, Kelsey and
             Luque, Victor J. and Duthie, A. Bradley and Lion, S{\'e}bastien},
  title   = {Persistence and Near Persistence via Trait Evolution:
             Pathways to Coexistence},
  journal = {Journal of Theoretical Biology},
  year    = {2026},
  volume  = {627},
  pages   = {112443},
  doi     = {10.1016/j.jtbi.2026.112443}
}

@article{patel2018robust,
  author  = {Patel, Swati and Schreiber, Sebastian J.},
  title   = {Robust Permanence for Ecological Equations with Internal and External Feedbacks},
  journal = {Journal of Mathematical Biology},
  year    = {2018},
  volume  = {77},
  number  = {1},
  pages   = {79--105},
  doi     = {10.1007/s00285-017-1187-5}
}

@article{schreiber2018evolution,
  author  = {Schreiber, Sebastian J. and Patel, Swati and terHorst, Casey P.},
  title   = {Evolution as a Coexistence Mechanism: Does Genetic Architecture Matter?},
  journal = {The American Naturalist},
  year    = {2018},
  volume  = {191},
  number  = {3},
  pages   = {407--420},
  doi     = {10.1086/695832}
}

@article{patel2019ecoevolutionary,
  author  = {Patel, Swati and B{\"u}rger, Reinhard},
  title   = {Eco-Evolutionary Feedbacks between Prey Densities and
             Linkage Disequilibrium in the Predator Maintain Diversity},
  journal = {Evolution},
  year    = {2019},
  volume  = {73},
  number  = {8},
  pages   = {1533--1548},
  doi     = {10.1111/evo.13785}
}

@article{patel2015evolutionarily,
  author  = {Patel, Swati and Schreiber, Sebastian J.},
  title   = {Evolutionarily Driven Shifts in Communities with Intraguild Predation},
  journal = {The American Naturalist},
  year    = {2015},
  volume  = {186},
  number  = {5},
  pages   = {E98--E110},
  doi     = {10.1086/683170}
}

@article{schreiber2011community,
  author  = {Schreiber, Sebastian J. and B{\"u}rger, Reinhard and Bolnick, Daniel I.},
  title   = {The Community Effects of Phenotypic and Genetic Variation within a Predator Population},
  journal = {Ecology},
  year    = {2011},
  volume  = {92},
  number  = {8},
  pages   = {1582--1593},
  doi     = {10.1890/10-2071.1}
}

@article{lande1976natural,
  author  = {Lande, Russell},
  title   = {Natural Selection and Random Genetic Drift in Phenotypic Evolution},
  journal = {Evolution},
  year    = {1976},
  volume  = {30},
  number  = {2},
  pages   = {314--334},
  doi     = {10.1111/j.1558-5646.1976.tb00911.x}
}

@incollection{metz1996adaptive,
  title     = {Adaptive Dynamics: A Geometrical Study of the Consequences
               of Nearly Faithful Reproduction},
  author    = {Metz, Johan A. J. and Geritz, Stefan A. H. and
               Mesz{\'e}na, G{\'e}za and Jacobs, Frans J. A. and
               van Heerwaarden, J. S.},
  booktitle = {Stochastic and Spatial Structures of Dynamical Systems},
  editor    = {van Strien, S. J. and Verduyn Lunel, S. M.},
  pages     = {183--231},
  year      = {1996},
  publisher = {North-Holland},
  address   = {Amsterdam}
}

@article{dieckmann1996dynamical,
  title   = {The Dynamical Theory of Coevolution: A Derivation from
             Stochastic Ecological Processes},
  author  = {Dieckmann, Ulf and Law, Richard},
  journal = {Journal of Mathematical Biology},
  volume  = {34},
  number  = {5--6},
  pages   = {579--612},
  year    = {1996},
  doi     = {10.1007/BF02409751}
}

@article{geritz1998evolutionarily,
  title   = {Evolutionarily Singular Strategies and the Adaptive Growth
             and Branching of the Evolutionary Tree},
  author  = {Geritz, Stefan A. H. and Kisdi, {\'E}va and
             Mesz{\'e}na, G{\'e}za and Metz, Johan A. J.},
  journal = {Evolutionary Ecology},
  volume  = {12},
  number  = {1},
  pages   = {35--57},
  year    = {1998},
  doi     = {10.1023/A:1006554906681}
}

@article{patel2024spectral,
  title={On the spectral radius properties of a key matrix in periodic impulse control},
  author={Patel, Swati and De Leenheer, Patrick},
  journal={Systems \& Control Letters},
  volume={187},
  pages={105781},
  year={2024},
  publisher={Elsevier}
}

@article{Emiljanowicz2014,
  author  = {Emiljanowicz, Lisa M. and Ryan, Geraldine D. and Langille, Aaron and Newman, Jonathan},
  title   = {Development, Reproductive Output and Population Growth of the Fruit Fly Pest {Drosophila suzukii} ({Diptera}: {Drosophilidae}) on Artificial Diet},
  journal = {Journal of Economic Entomology},
  year    = {2014},
  volume  = {107},
  number  = {4},
  pages   = {1392--1398},
  doi     = {10.1603/EC13504}
}

@article{Mermer2021,
  author  = {Mermer, Serhan and Pfab, Ferdinand and Tait, Gabriella and Isaacs, Rufus and Fanning, Philip D. and Van Timmeren, Steven and Loeb, Gregory M. and Hesler, Stephen P. and Sial, Ashfaq A. and Hunter, Jamal H. and Bal, Harit Kaur and Drummond, Francis and Ballman, Elissa and Collins, Judith and Xue, Lan and Jiang, Duo and Walton, Vaughn M.},
  title   = {Timing and Order of Different Insecticide Classes Drive Control of {Drosophila suzukii}; a Modeling Approach},
  journal = {Journal of Pest Science},
  year    = {2021},
  doi     = {10.1007/s10340-020-01292-w}
}

@article{Markus1956,
    author = {Markus, L.},
    title = {Asymptotically autonomous differential systems},
    journal = {Contributions to the Theory of Nonlinear Oscillations},
    year = {1956}, 
    volume = {III}, 
    publisher = {Princeton: Princeton University Press}, 
    doi = {10.1515/9781400882175-003}
}

@book{Smith1995,
  author={Smith, H.L.},
  title = {Monotone Dynamical
Systems: An Introduction to the Theory
of Competitive and Cooperative
Systems},
  year={1995},
  publisher={Springer}
}

@book{Caswell2001,
  author    = {Caswell, Hal},
  title     = {Matrix Population Models: Construction, Analysis, and Interpretation},
  edition   = {2},
  publisher = {Sinauer Associates},
  address   = {Sunderland, Massachusetts},
  year      = {2001},
  isbn      = {0878930965}
}

@article{Patel2025,
  author  = {Patel, Swati and Lyberger, Kelsey and Vegvari, Carolin and Gulbudak, Hayriye},
  title   = {Eco-evolutionary Dynamics of Anthelmintic Resistance in Soil-transmitted Helminths},
  journal = {Theoretical Population Biology},
  year    = {2025},
  volume  = {163},
  pages   = {80--90},
  doi     = {10.1016/j.tpb.2025.03.006}
}

@article{GressZalom2019,
  author  = {Gress, Brian E. and Zalom, Frank G.},
  title   = {Identification and Risk Assessment of Spinosad Resistance in a California Population of {Drosophila suzukii}},
  journal = {Pest Management Science},
  year    = {2019},
  volume  = {75},
  number  = {5},
  pages   = {1270--1276},
  doi     = {10.1002/ps.5240}
}

@article{Taylor1990,
    author = {Taylor, Peter D.},
    title = {Allele-Frequency Change in a Class Structured Population},
    journal = {The American Naturalist},
    year = {1990},
    volume = {135},
    number = {1},
    pages = {95 -- 106}
}

@article{Barfield2011,
author = {Barfield, Michael and Holt, Robert D. and Gomulkiewicz, Richard},
title = {Evolution in Stage-Structured Populations},
journal = {The American Naturalist}, 
year = {2011}, 
volume = {177}, 
number = {4}, 
pages = {397 -- 409}, 
doi = {10.1086/658903}
}

@article{deVries2019,
    author = {de Vries, Charlotte and Caswell, Hal},
    title = {Stage-Structured Evolutionary Demography: Linking Life Histories, Population Genetics, and Ecological Dynamics},
    journal = {The American Naturalist},
    year = {2019},
    volume = {193}, 
    number = {4}, 
    pages = {545 -- 559}, 
    doi = {10.1086/701857}
}

@article{Dong2006,
    author={Dong, L. and Chen, L. and Sun, L.},
    title={{Extinction and permanence of the predator–prey system with
stocking of prey and harvesting of predator impulsively}},
    year={2006},
    journal={Mathematical Methods in the Applied Sciencs},
    volume = {29},
    pages = {415--425}
}

@article{Jiao2008,
author = {Jiao, Jj. and Chen, Ls. and Nieto, J.J.}, 
title = {Permanence and global attractivity of stage-structured predator-prey model with continuous harvesting on predator and impulsive stocking on prey}, 
journal = {Appl. Math. Mech.-Engl. Ed.}, 
volume = {29}, 
pages = {653--663},
year = {2008}, 
doi = {doi.org/10.1007/s10483-008-0509-x}
}

@article{Jiao2007,
    author={Jiao, J. and Meng, X. and Chen, L. },
    title={{A stage-structured Holling mass defence predator-prey model with impulsive perturbations on predators}},
    year={2007},
    journal={Applied Math and Computation},
    volume = {189},
    pages = {1448--1458}
}

@article{He2015,
title = {Dynamics analysis of a two-species competitive model with state-dependent impulsive effects},
journal = {Journal of the Franklin Institute},
volume = {352},
number = {5},
pages = {2090-2112},
year = {2015},
issn = {0016-0032},
doi = {doi.org/10.1016/j.jfranklin.2015.02.021},
author = {Zhi-Long He and Lin-Fei Nie and Zhi-Dong Teng}
}

@article{Jin2005,
title = {The persistence in a {L}otka–{V}olterra competition systems with impulsive},
journal = {Chaos, Solitons \& Fractals},
volume = {24},
number = {4},
pages = {1105-1117},
year = {2005},
issn = {0960-0779},
doi = {doi.org/10.1016/j.chaos.2004.09.065},
author = {Zhen Jin and Han Maoan and Li Guihua}
}

@article{Kalra2022,
author = {Kalra, Preety and Kaur, Maninderjit},
title = {Stability analysis of an eco-epidemiological SIN model with impulsive control strategy for integrated pest management considering stage-structure in predator},
journal = {International Journal of Mathematical Modelling and Numerical Optimisation},
volume = {12},
number = {1},
pages = {43-68},
year = {2022},
doi = {10.1504/IJMMNO.2022.119779}
}

@article{Li2022,
author = {Li, Zuxiong and Yang, Xue and Fu, Shengnan},
title = {Dynamical Behavior of a Predator-Prey System Incorporating a Prey Refuge with Impulse Effect},
journal = {Complexity},
volume = {2022},
number = {1},
doi = {doi.org/10.1155/2022/2422923},
year = {2022}
}

@article{dovidio2026,
    author = {D'Ovidio Long, A. and Deng, B. and Du, H.},
    title = {Dynamical Analysis of an Impulsive Model of
Cancer Cell Populations Under Radiotherapy},
    journal = {Mathematical Medicine and Biology: A Journal of the IMA},
    year = {2026},
    doi = {10.1093/imammb/dqag004}
}

@book{Rozman2010,
    author = {Rozman, Karl K. and Doull, John and Hayes Jr., Wayland J.},
    title = {Handbook of Pesticide Toxicology (Third Edition)},
    publisher = {Academic Press},
    year = {2010},
    pages = {3--101},
    doi = {10.1016/B978-0-12-374367-1.00001-X}
}

@article{Ruwende1995,
    author = {Ruwende, C. and Khoo, S.C. and Snow, R.W. and Yates, S.N.R and
    Kwiatkowski, D. and Gupta, S. and Warn, P. and Allsopp, C.E.M. and Gilbert, S.C. and Peschu, N. and Newbold, C.I. and Greenwood, B.M. and Marsh, K. and Hill, A.V.S.},
    title = {Natural selection of hemi- and heterozygotes for G6PD deficiency in Africa by resistance to severe malaria}, 
    journal = {Nature}, 
    year = {1995}, 
    pages = {246 -- 249}, 
    volume = {376}, 
    doi = {10.1038/376246a0}
}

@article{Wiman2016,
  author  = {Wiman, Nik G. and Dalton, Daniel T. and Anfora, Gianfranco
             and Biondi, Antonio and Chiu, Joanna C. and Daane, Kent M.
             and Gerdeman, Beverly and Gottardello, Angela and Hamby,
             Kelly A. and Isaacs, Rufus and Grassi, Alberto and Ioriatti,
             Claudio and Lee, Jana C. and Miller, Betsy and Stacconi,
             Marco Valerio Rossi and Shearer, Peter W. and Tanigoshi,
             Lynell and Wang, Xingeng and Walton, Vaughn M.},
  title   = {Drosophila suzukii population response to environment
             and management strategies},
  journal = {Journal of Pest Science},
  year    = {2016},
  volume  = {89},
  pages   = {653--665},
  doi     = {10.1007/s10340-016-0757-4}
}

@article{Ganjisaffar2022,
  author  = {Ganjisaffar, Fatemeh and Demkovich, Mark R. and
             Chiu, Joanna C. and Zalom, Frank G.},
  title   = {Characterization of Field-Derived Drosophila suzukii
             (Diptera: Drosophilidae) Resistance to Pyrethroids in
             California Berry Production},
  journal = {Journal of Economic Entomology},
  year    = {2022},
  volume  = {115},
  number  = {5},
  pages   = {1676--1684},
  doi     = {10.1093/jee/toac118}
}

@article{Tabuloc2024,
  author  = {Tabuloc, Christine A. and Carlson, Curtis R. and
             Ganjisaffar, Fatemeh and Truong, Cindy C. and
             Chen, Ching-Hsuan and Lewald, Kyle M. and
             Hidalgo, Sergio and Nicola, Nicole L. and
             Jones, Cera E. and Sial, Ashfaq A. and
             Zalom, Frank G. and Chiu, Joanna C.},
  title   = {Transcriptome analysis of Drosophila suzukii reveals
             molecular mechanisms conferring pyrethroid and spinosad resistance},
  journal = {Scientific Reports},
  volume  = {14},
  pages   = {19867},
  year    = {2024},
  doi     = {10.1038/s41598-024-70037-x}
}

@article{Horst1994, 
author = {Thieme, Horst R.},
title = {Asymptotically Autonomous Differential Equations in the Plane},
journal = {The Rocky Mountain Journal of Mathematics},
volume = {24},
pages = {351 -- 380},
year = {1994}, 
doi = {10.1216/rmjm/1181072470}
}

\end{document}